\documentclass[12pt,twoside]{article}

 \usepackage{float}
\usepackage{graphicx}
\usepackage{epstopdf}
\usepackage{graphicx}
\usepackage{epic}
\usepackage{multirow}
\usepackage{tikz}
\usepackage{xcolor}
\usepackage{threeparttable}
\usetikzlibrary{arrows,shapes,chains}

\usepackage{graphicx}
\usepackage{makecell}
\renewcommand{\paragraph}{\roman{paragraph}}
\usepackage[a4paper]{geometry}
\setlength{\textwidth}{6.4in}
\setlength{\textheight}{8.8in}
\setlength{\topmargin}{0pt}
\setlength{\headsep}{25pt}
\setlength{\headheight}{0pt}
\setlength{\oddsidemargin}{0pt}
\setlength{\evensidemargin}{0pt}


\makeatletter
\renewcommand\title[1]{\gdef\@title{\reset@font\Large\bfseries #1}}
\renewcommand\section{\@startsection {section}{1}{\z@}%
                                   {-3.5ex \@plus -1ex \@minus -.2ex}%
                                   {2.3ex \@plus.2ex}%
                                   {\normalfont\large\bfseries}}
\renewcommand\subsection{\@startsection{subsection}{2}{\z@}%
                                     {-3ex\@plus -1ex \@minus -.2ex}%
                                     {1.5ex \@plus .2ex}%
                                     {\normalfont\normalsize\bfseries}}
\renewcommand\subsubsection{\@startsection{subsubsection}{3}{\z@}%
                                     {-2.5ex\@plus -1ex \@minus -.2ex}%
                                     {1.5ex \@plus .2ex}%
                                     {\normalfont\normalsize\bfseries}}

\def\@runningauthor{}\newcommand{\runningauthor}[1]{\def\runningauthor{#1}}
\def\@runningtitle{}\newcommand{\runningtitle}[1]{\def\runningtitle{#1}}

\renewcommand{\ps@plain}{%
\renewcommand{\@evenhead}{\footnotesize\scshape \hfill\runningauthor\hfill}
\renewcommand{\@oddhead}{\footnotesize\scshape \hfill\runningtitle\hfill}}

\newcommand{\F}{\mathbb{F}}

\pagestyle{plain}

\g@addto@macro\bfseries{\boldmath}

\makeatother



\usepackage{amsthm,amsmath,amssymb}
\usepackage{cite}
\usepackage{graphicx}

\usepackage[colorlinks=true,citecolor=black,linkcolor=black,urlcolor=blue]{hyperref}

\theoremstyle{plain}
\newtheorem{theorem}{Theorem}[section]

\newtheorem{lem}[theorem]{Lemma}
\newtheorem{cor}[theorem]{Corollary}
\newtheorem{prop}[theorem]{Proposition}

\theoremstyle{definition}
\newtheorem{definition}[theorem]{Definition}
\newtheorem{example}[theorem]{Example}

\theoremstyle{remark}
\newtheorem{remark}[theorem]{Remark}








\runningauthor{}

\date{}

\begin{document}

\title{\Large{Optimal quaternary linear codes with one-dimensional Hermitian hull and the related EAQECCs}}
\author{Shitao Li\thanks{lishitao0216@163.com}, Minjia Shi\thanks{smjwcl.good@163.com}, Huizhou Liu\thanks{18756027866@163.com}
\thanks{Shitao Li and Minjia Shi are with School of Mathematical Sciences, Anhui University, Hefei, China. Huizhou Liu is with State Grid Anhui Electric Power Co., Ltd., Hefei, China.
}}

\maketitle

\begin{abstract}
Linear codes with small hulls over finite fields have been extensively studied due to their practical applications in computational complexity and information protection.
In this paper, we develop a general method to determine the exact value of $D_4^H(n,k,1)$ for $n\leq 12$ or $k\in \{1,2,3,n-1,n-2,n-3\}$, where $D_4^H(n,k,1)$ denotes the largest minimum distance among all quaternary linear $[n,k]$ codes with one-dimensional Hermitian hull. As a consequence, we solve a conjecture proposed by Mankean and Jitman on the largest minimum distance of a quaternary linear code with one-dimensional Hermitian hull.
 As an application, we construct some binary entanglement-assisted quantum error-correcting codes (EAQECCs) from quaternary linear codes with one-dimensional Hermitian hull. Some of these EAQECCs are optimal codes, and some of them are better than previously known ones.
\end{abstract}
{\bf Keywords:} Quaternary codes, Hermitian hull, Simplex codes, EAQECCs.\\
{\bf Mathematics Subject Classification} 94B05 15B05 12E10

\section{Introduction}
Quantum error-correcting codes (QECCs) play a vital role in quantum computing and quantum communication. Calderbank et al. \cite{PRA-1} and Steane \cite{PRL} introduced an effective method (the CSS construction) for constructing QECCs from classical self-orthogonal codes (or dual-containing codes) over finite fields. However, general linear codes are not suitable for the CSS construction. To avoid this problem,
Brun et al. in \cite{Eaqecc} introduced the concept of entanglement-assisted quantum error-correcting codes (EAQECCs), which a generalization of the quantum stabilizer codes. A very powerful tool for constructing EAQECCs is to determine the hull dimension of a linear code, especially Hermitian hull dimension (see \cite{QIP-111,PRA-2,DCC-hull,sok,Quantum-1,Quantum-2,luo,luo-1,kai-1,zhu-1,zhu-2}).
In this paper, we focus on the most interesting case for applications, namely $p=2$.
We characterize optimal quaternary linear codes with one-dimensional Hermitian hull, and construct good binary EAQECCs from those codes.

The hull of a linear code was first introduced in 1990 by Assmus and Key \cite{A-hull-DM} to classify finite projective planes. Recently, some topics related to hull in coding theory have been widely studied due to their practical applications in computational complexity and information protection. It has been shown that the hull determines the complexity of the algorithms for checking permutation equivalence of two linear codes \cite{J-p-group,Sen-p-e-codes} and for computing the automorphism group of a linear code \cite{J-1,Sen-S-auto-group}. Therefore, it is a hot topic to study linear codes with small hulls.
Let $D^E_q(n,k,l)$ denote the largest minimum distance among all $q$-ary $[n,k]$ linear codes with $l$-dimensional Euclidean hull and $D^H_q(n,k,l)$ denote the largest minimum distance among all $q$-ary $[n,k]$ linear codes with $l$-dimensional Hermitian hull.
The exact values of $D^E_2(n,k,0)$ and $D^E_2(n,n-k,0)$ for $k\in \{1,2,3,4,5\}$ were determined in \cite{D-1,q=2-k=2,A-1,H-1,ter-11-19,AHS-BLCD,q=2-k=5}.
The exact values of $D^E_3(n,k,0)$ and $D^E_3(n,n-k,0)$ for $k\in \{2,3,4\}$ were determined in \cite{1q=3-k=2,AHS-BLCD,ter-11-19}. The exact value of $D^H_4(n,k,0)$ for $k\in \{2,3\}$ were determined in \cite{q=4-k=2,IT-1}. For more related work, readers are refereed to \cite{LCD-a,LCD-b,LCD-c,Dcc-40,ccds}.

Recently, Mankean and Jitman \cite{JAMC-1} determined the exact value of $D_q^E(n,2,1)$ for $q=2,3$. The authors \cite{hull-1} studied the exact values of $D_2^E(n,k,1)$ and $D_2^E(n,n-k,1)$ for $k\leq 30$ or $1\leq k\leq 5$.
Mankean and Jitman \cite{AJM-conjecture} also determined the exact value of $D_4^E(n,2,1)$ for $n\equiv~1,2,4~({\rm mod}~5)$ and made a conjecture for $n\equiv~0,3~({\rm mod}~5)$. For more related work, readers are refereed to \cite{hull-4,hull-2,hull-3,sok}.
In this paper, we consider the next step in the above work. Using the classification in \cite{IT-1,DM} and quaternary linear codes with one-dimensional Hermitian hull related to the simplex codes, we characterize the exact values of $D^H_4(n,k,1)$ and $D^H_4(n,n-k,1)$ for $n\leq 12$ or $k\in\{1,2,3\}$. As a consequence, we solve the conjecture proposed by Mankean and Jitman \cite{AJM-conjecture}. As an application, we construct some new binary EAQECCs from quaternary linear codes with one-dimensional Hermitian hull. For example, we obtain some EAQECCs with the parameters $[[8,3,4;3]]_2$, $[[9,3,5;4]]_2$, $[[10,5,4;3]]_2$, $[[11,2,7;7]]_2$, $[[12,2,8;8]]_2$, $[[12,3,7;7]]_2$, $[[12,7,4;3]]_2$, respectively, while the EAQECCs in \cite{codetables,luo-1} have the parameters $[[8,3,4;4]]_2$, $[[9,3,5;5]]_2$, $[[10,5,4;4]]_2$, $[[11,2,6;5]]_2$, $[[12,2,7;9]]_2$, $[[12,3,7;9]]_2$ and $[[12,7,4;4]]_2$. The EAQECCs we constructed have better minimum distance and smaller amount of entanglement than the best known EAQECCs.
Therefore, we can consider that some of the EAQECCs we have constructed are better than those previously known.

The paper is organized as follows. In Section 2, we give some notations and preliminaries.
In Section 3, we study quaternary linear codes with one-dimensional Hermitian hull related to the simplex codes. In Section 4, we characterize the exact values of $D^H_4(n,k,1)$ and $D^H_4(n,n-k,1)$ for $k\in \{1,2,3\}$. In Section 5, we characterize the exact value of $D^H_4(n,k,1)$ for $n\leq 12$. In Section 6, we construct some EAQECCs from quaternary linear codes with one-dimensional Hermitian hull. In Section 7, we conclude the paper.

\section{Preliminaries}
Let $\F_4=\{0,1,\omega,\omega^2\}$ denote the finite field with 4 elements.
For $\alpha \in \F_4$, the conjugation
of $\alpha$ is defined as $\overline{\alpha}=\alpha^2$.
For a matrix $A = (a_{ij})$, the transpose of $A$ is defined as $A^T=(a_{ji})$, and the conjugate matrix of $A$ is defined as $\overline{A} = (\overline{a_{ij}})$.
For any ${\bf u}\in \F_4^n$, the {\em Hamming weight} of ${\bf u}$ is the number of nonzero components of ${\bf u}$.
A quaternary linear $[n,k,d]$ code $C$ is a $k$-dimensional subspace of $\F_4^n$, where $d$ is the minimum nonzero weight of $C$.
A {\em generator matrix} for a quaternary linear $[n, k]$ code $C$ is any
$k\times n$ matrix $G$ whose rows form a basis for $C$.
A {\em parity-check matrix} for a linear code $C$ is a generator matrix for the dual code $C^{\perp_H}$.
 A {\em monomial} matrix $M$ is a square matrix over $\F_4$ with exactly one nonzero entry in each row and column. Two quaternary linear
$[n, k]$ codes $C_1$ and $C_2$ are {\em equivalent} if there exists a $n\times n$
monomial matrix $M$ over $\F_4$ with $C_2=\{xM~|~x \in C_1\}$.
Let $A_{\omega}$ denote the number of codewords of weight $\omega$ of $C$, where $1\leq i\leq n.$ The sequence $(A_{0},A_1,\ldots,A_n)$ is called the {\em weight distribution} of $C$, and $\sum_{i=0}^nA_ix^i$ is called the {\em weight enumerator} of $C$.

Let $C$ be a quaternary $[n, k, d]$ linear code, and let $S$ be a set of $s$
coordinate positions in $C$. We puncture $C$ by deleting all the coordinates in $S$ in each codeword of $C$. The resulting code is still linear and has length $n-s$. We denote the {\em punctured code} by $C^S$. Consider the set $C(S)$ of codewords which are $0$ on $S$; this set is a subcode of $C$. Puncturing $C(S)$ on $S$ gives a quaternary code of length $n-t$ called the {\em shortened code} on $S$ and denoted $C_S$.

The Hermitian dual code $C^{\perp_H}$ of a quaternary linear $[n,k]$ code $C$ is defined as
$$C^{\perp_H}=\{\textbf v\in \F_4^n~|~\langle \textbf u, \textbf v\rangle_H=0, {\rm for\ all}\ \textbf u\in C \},$$
where $\langle \textbf u, \textbf v\rangle_H=\sum_{i=1}^n u_i\overline{v_i}$ for $\textbf u= (u_1,u_2, \ldots, u_n)$ and $\textbf v = (v_1,v_2, \ldots, v_n)\in \F_4^n$.
Then $C^{\perp_H}$ is a quaternary linear $[n,n-k]$ code.
The {\em Hermitian hull} of a quaternary linear code $C$ is defined as
$${\rm Hull}_{\rm H}(C)= C \cap C^{\perp_H}.$$
It is easy to see that ${\rm Hull}_{\rm H}(C)$ is a quaternary linear code.
Suppose that the dimension of ${\rm Hull}_{\rm H}(C)$ is $l$.
\begin{itemize}
  \item If $l=0$, that is to say, $C\cap C^{\perp_H}=\{\textbf 0\}$, then the code $C$ is a {\em Hermitian linear complementary dual} (Hermitian LCD) code.
  \item If $l=k$, that is to say, $C\subseteq C^{\perp_H}$, then the code $C$ is a {\em Hermitian self-orthogonal} (Hermitian SO) code.
\end{itemize}

The Hermitian hull dimension of a linear code $C$ was determined in \cite{DCC-hull} as follows.

\begin{prop}{\rm\cite[Proposition 3.2]{DCC-hull}}
Let $C$ be quaternary linear $[n,k]$ code with generator matrix $G$. Then
${\rm rank}(G\overline{G}^T)$ is independent of $G$ so that
$$\dim({\rm Hull}_{\rm H}(C))=\dim({\rm Hull}_{\rm H}(C^{\perp_H}))=k-{\rm rank}(G\overline{G}^T).$$
\end{prop}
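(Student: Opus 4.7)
The plan is to parametrize $C$ by its generator matrix and translate the Hermitian hull condition into a matrix equation whose solution space can be read off from $G\overline{G}^T$.

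First I would consider the $\F_4$-linear isomorphism $\phi:\F_4^k\to C$ defined by $x\mapsto xG$. Given $x\in\F_4^k$, the codeword $xG$ lies in $C^{\perp_H}$ if and only if $\langle xG,yG\rangle_H=0$ for every $y\in\F_4^k$, since the rows of $G$ span $C$. Expanding the Hermitian inner product gives
$$\langle xG,yG\rangle_H = xG\,\overline{yG}^{T}=xG\,\overline{G}^{T}\,\overline{y}^{T}.$$
As $y$ ranges over $\F_4^k$, so does $\overline{y}$ (conjugation is a bijection of $\F_4^k$), so the vanishing of this expression for all $y$ is equivalent to the single matrix equation $xG\overline{G}^T=0$. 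Hence
$$\phi^{-1}({\rm Hull}_{\rm H}(C))=\{x\in\F_4^k:xG\overline{G}^T=0\},$$
a subspace of $\F_4^k$ whose dimension is $k-{\rm rank}(G\overline{G}^T)$ by the rank-nullity theorem. This yields $\dim({\rm Hull}_{\rm H}(C))=k-{\rm rank}(G\overline{G}^T)$.

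Next I would verify independence from the choice of $G$. Any other generator matrix $G'$ of $C$ has the form $G'=PG$ for some invertible $P\in GL_k(\F_4)$, and then
$$G'\overline{G'}^{T}=PG\overline{G}^{T}\overline{P}^{T}.$$
Because $P$ and $\overline{P}$ are invertible, left and right multiplication preserve rank, so ${\rm rank}(G'\overline{G'}^T)={\rm rank}(G\overline{G}^T)$. Combined with the formula above, this shows the rank is intrinsic to $C$.

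Finally, to obtain the symmetry $\dim({\rm Hull}_{\rm H}(C))=\dim({\rm Hull}_{\rm H}(C^{\perp_H}))$, I would invoke the involutive property $(C^{\perp_H})^{\perp_H}=C$, which gives
$${\rm Hull}_{\rm H}(C^{\perp_H})=C^{\perp_H}\cap(C^{\perp_H})^{\perp_H}=C^{\perp_H}\cap C={\rm Hull}_{\rm H}(C),$$
so in particular the two dimensions agree. There is no real obstacle here; the only delicate point is handling the conjugation correctly, ensuring that ``for all $y$'' translates to ``for all $\overline{y}$'' and hence to the clean matrix condition $xG\overline{G}^T=0$ without introducing semilinear artifacts.
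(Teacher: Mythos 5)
Your proof is correct. The paper does not prove this proposition at all---it is quoted from \cite[Proposition 3.2]{DCC-hull}---and your argument (identifying $\phi^{-1}({\rm Hull}_{\rm H}(C))$ with the left null space of $G\overline{G}^T$, checking invariance under $G\mapsto PG$, and using $(C^{\perp_H})^{\perp_H}=C$ to get ${\rm Hull}_{\rm H}(C^{\perp_H})={\rm Hull}_{\rm H}(C)$) is exactly the standard one given in that reference, with the only delicate point, that non-symmetry of $\langle\cdot,\cdot\rangle_H$ is harmless because $\langle u,v\rangle_H=0$ iff $\langle v,u\rangle_H=0$, handled correctly.
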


A quaternary code $C$ is said to be {\em even} if the weights of all codewords of $C$ are even. A sufficient and necessary condition for $C$ to be Hermitian SO is given as follows.

\begin{prop}{\rm \cite[Theorem 1]{JCTA-1}}\label{so}
A quaternary linear code $C$ is Hermitian SO if and only if $C$ is even.
\end{prop}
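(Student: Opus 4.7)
My plan is to exploit the identity
\[
\langle \mathbf{u},\mathbf{u}\rangle_H \;=\; \sum_{i=1}^n u_i\overline{u_i} \;=\; \sum_{i=1}^n u_i^3,
\]
and observe that for $\alpha\in\F_4$ one has $\alpha^3 = 1$ when $\alpha\neq 0$ and $\alpha^3 = 0$ otherwise. Since $\F_4$ has characteristic $2$, this forces $\langle \mathbf{u},\mathbf{u}\rangle_H$ to equal the Hamming weight of $\mathbf{u}$ reduced modulo $2$. Hence $\langle \mathbf{u},\mathbf{u}\rangle_H = 0$ in $\F_4$ if and only if $\mathrm{wt}(\mathbf{u})$ is even; this observation is the technical backbone of both directions.

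For the \emph{only if} direction I would simply specialize the definition of Hermitian self-orthogonality: $C\subseteq C^{\perp_H}$ gives $\langle \mathbf{u},\mathbf{u}\rangle_H = 0$ for every $\mathbf{u}\in C$, and by the identity above every codeword has even weight.

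The \emph{if} direction is the one requiring a polarization trick. Assume every codeword of $C$ has even weight, so $\langle \mathbf{z},\mathbf{z}\rangle_H = 0$ for all $\mathbf{z}\in C$. Given arbitrary $\mathbf{u},\mathbf{v}\in C$, set $x := \langle \mathbf{u},\mathbf{v}\rangle_H\in\F_4$; I want to prove $x=0$. Applying the hypothesis to $\mathbf{u}+\mathbf{v}\in C$ and expanding the sesquilinear form (using $\overline{\langle \mathbf{u},\mathbf{v}\rangle_H} = \langle \mathbf{v},\mathbf{u}\rangle_H$ and $\langle \mathbf{u},\mathbf{u}\rangle_H = \langle \mathbf{v},\mathbf{v}\rangle_H = 0$) yields $x + \overline{x} = 0$, whence $x\in\F_2=\{0,1\}$. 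To rule out $x=1$ I would apply the same argument to the codeword $\omega\mathbf{u}+\mathbf{v}\in C$ (available because $C$ is $\F_4$-linear). Using $\omega\overline{\omega} = \omega^3 = 1$, the expansion collapses to $\omega x + \omega^2\overline{x} = 0$, i.e.\ $x = \omega\overline{x}$. Combined with $x=\overline{x}$ from the first polarization, this forces $(1+\omega)x = 0$ and hence $x=0$.

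The only subtle point, and what I would flag as the main conceptual step, is the necessity of invoking two different combinations $\mathbf{u}+\mathbf{v}$ and $\omega\mathbf{u}+\mathbf{v}$: a single polarization identity over a characteristic-$2$ field reduces $\langle\mathbf{u},\mathbf{v}\rangle_H$ only to the subfield $\F_2$, not to $0$. Using the $\omega$-scalar exploits the genuinely quaternary (rather than merely binary) structure of the code, so the proof really does use that $C$ is linear over $\F_4$, not just over $\F_2$.
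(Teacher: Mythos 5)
Your proof is correct. The paper itself gives no argument for this proposition---it is quoted directly from \cite[Theorem 1]{JCTA-1}---and your argument is the standard one from that source: the identity $\langle \mathbf{u},\mathbf{u}\rangle_H=\mathrm{wt}(\mathbf{u})\bmod 2$ (via $\alpha^3=1$ for $\alpha\in\F_4^*$) handles the forward direction, and the double polarization with $\mathbf{u}+\mathbf{v}$ and $\omega\mathbf{u}+\mathbf{v}$ correctly pins $\langle\mathbf{u},\mathbf{v}\rangle_H$ down from $\F_2$ to $0$, with your closing remark rightly identifying where the $\F_4$-linearity of $C$ is essential.
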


The {\em Griesmer bound} \cite{Huffman} on a quaternary linear $[n,k,d]$
code is given by
$n\geq \sum_{i=0}^{k-1}\left\lceil \frac{d}{4^i}\right\rceil,$
 where $\lceil a \rceil$ is the least integer greater than or equal to a real number $a$. The {\em sphere-packing bound} \cite{Huffman} on a quaternary linear $[n,k,d]$ code is given by
$4^k\leq \frac{4^n}{\sum_{i=0}^{\lfloor \frac{d-1}{2}\rfloor}3^i\left(\begin{array}{c}
                                                                 n \\
                                                                 i
                                                               \end{array}
\right)}.$

\begin{definition}
For positive integers $n, k$, let
\begin{align*}
  D_4(n,k):= & \max\{d~|~\exists{\rm~ a~quaternary~linear}~[n,k,d]~{\rm code}\}, \\
  D^H_4(n,k,1):= & \max\{d~|~\exists{\rm~ a~quaternary~linear}~[n,k,d]~{\rm code~with~} \dim({\rm Hull}_{\rm H}(C)) =1\}.
\end{align*}
\end{definition}

\begin{theorem}{\rm \cite[Theorem 5]{luo}}\label{thm-luo}
Let $C$ be a quaternary linear $[n,k,d]$ code with Hermitian hull $Hull_H(C)$ of dimension $l$ and let $S\subseteq [n]$ be such that $|S|=s$.
Then the following statements hold.
\begin{itemize}
  \item [(1)] $(Hull_H(C))_S\subseteq Hull_H(C^S)$ and $(Hull_H(C))_S\subseteq Hull_H(C_S)$
  \item [(2)] If $S$ is a subset of an information set of $Hull_H(C)$, then
  $$Hull_H(C^S)=Hull_H(C_S)=(Hull_H(C))_S~with~\dim(Hull_H(C^S))=l-s.$$
\end{itemize}
\end{theorem}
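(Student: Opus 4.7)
For part (1), I would argue directly from the definitions. Fix $\mathbf{c}\in Hull_H(C)$ with $\mathbf{c}|_S = 0$, and let $\mathbf{c}'$ denote its restriction to $[n]\setminus S$. Clearly $\mathbf{c}'\in C_S\subseteq C^S$, and for any $\mathbf{d}\in C$ with punctured image $\mathbf{d}'\in C^S$ one has
\[
\langle \mathbf{c}', \mathbf{d}' \rangle_H = \sum_{i\notin S} c_i\overline{d_i} = \sum_{i=1}^{n} c_i\overline{d_i} = \langle \mathbf{c}, \mathbf{d}\rangle_H = 0,
\]
so $\mathbf{c}'\in (C^S)^{\perp_H}\subseteq (C_S)^{\perp_H}$. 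This yields both inclusions in (1), and also shows that the assignment $\mathbf{c}\mapsto \mathbf{c}'$ is injective on the subspace of $Hull_H(C)$ that vanishes on $S$.

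For part (2), the plan is to pick a generator matrix of $C$ adapted to $S$ and then apply Proposition~2.2 to both $C_S$ and $C^S$. After a coordinate permutation we may assume $S=\{1,\dots,s\}$, that $S\cup S'$ is an information set of $Hull_H(C)$, and that $S\cup S'\cup T$ extends it to an information set of $C$. Gaussian elimination then produces
\[
G = \begin{pmatrix} I_s & 0 & 0 & A_1 \\ 0 & I_{l-s} & 0 & A_2 \\ 0 & 0 & I_{k-l} & A_3 \end{pmatrix},
\]
whose first $l$ rows span $Hull_H(C)$. Because these rows lie in $C^{\perp_H}$, the first $l$ rows of $G\overline{G}^T$ vanish; and since $G\overline{G}^T$ equals its own conjugate-transpose, its first $l$ columns vanish as well. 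Reading off blocks gives the identities $A_1\overline{A_1}^T=I_s$, $A_2\overline{A_2}^T=I_{l-s}$, and $A_i\overline{A_j}^T=0$ for all $i\neq j$, while the rank identity $\mathrm{rank}(G\overline{G}^T)=k-l$ from Proposition~2.2 forces $I_{k-l}+A_3\overline{A_3}^T$ to be invertible. I expect these block-orthogonality relations to be the main technical step; everything else is routine block arithmetic.

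Once the identities are in hand, deleting the $S$-columns produces generator matrices $G_{C^S}$ of $C^S$ (all $k$ rows retained) and $G_{C_S}$ of $C_S$ (rows $s+1,\dots,k$ retained, since these are the ones whose $S$-entries vanish). A direct computation using the identities above gives
\[
G_{C^S}\overline{G_{C^S}}^T = \begin{pmatrix} I_s & 0 & 0 \\ 0 & 0 & 0 \\ 0 & 0 & I_{k-l}+A_3\overline{A_3}^T \end{pmatrix}, \qquad G_{C_S}\overline{G_{C_S}}^T = \begin{pmatrix} 0 & 0 \\ 0 & I_{k-l}+A_3\overline{A_3}^T \end{pmatrix},
\]
of ranks $s+(k-l)$ and $k-l$ respectively. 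Noting that $\dim C^S=k$ and $\dim C_S=k-s$ (because $S$ is in an information set of $C$), Proposition~2.2 then yields $\dim Hull_H(C^S)=k-(s+(k-l))=l-s$ and $\dim Hull_H(C_S)=(k-s)-(k-l)=l-s$. Since $(Hull_H(C))_S$ is already $(l-s)$-dimensional by part (1), the two inclusions of (1) become equalities, completing (2).
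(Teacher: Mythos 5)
The paper does not actually prove this statement: it is imported verbatim as \cite[Theorem 5]{luo}, so there is no in-paper proof to compare against. Judged on its own, your part (1) is correct and complete, and your overall strategy for part (2) --- put $G$ in a standard form adapted to $S$, compute the Gram matrices of the punctured and shortened generator matrices, and invoke Proposition 2.1 (the rank formula $\dim \mathrm{Hull}_H = k - \mathrm{rank}(G\overline{G}^T)$) together with the dimension count from part (1) --- is sound and does reach the conclusion.

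There are, however, two inaccuracies you should repair. First, your normal form overclaims: you cannot in general force the first $l$ rows (spanning the hull) to vanish on the block $T$, because once those rows are reduced to $I_l$ on the information set $S\cup S'$ their entries on any complementary set $T$ are determined, and clearing them would require adding the last $k-l$ rows, destroying the property that the first $l$ rows span $\mathrm{Hull}_H(C)$. The correct form is
$\bigl(\begin{smallmatrix} I_s & 0 & B_1 & A_1 \\ 0 & I_{l-s} & B_2 & A_2 \\ 0 & 0 & I_{k-l} & A_3 \end{smallmatrix}\bigr)$,
and consequently your identities $A_1\overline{A_1}^T=I_s$, $A_2\overline{A_2}^T=I_{l-s}$, $A_i\overline{A_j}^T=0$ are false as stated; the orthogonality of the hull rows to all of $G$ only gives $B_1\overline{B_1}^T+A_1\overline{A_1}^T=I_s$, $B_i+A_i\overline{A_3}^T=0$, etc. Fortunately this does not derail the computation: since deleting the $S$-columns subtracts exactly $\mathrm{diag}(I_s,0,\dots,0)$ from $G\overline{G}^T$, whose first $l$ rows and columns vanish, the displayed Gram matrices for $G_{C^S}$ and $G_{C_S}$ come out exactly as you wrote them, with or without the $B_i$. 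Second, your justification ``$\dim C^S=k$ because $S$ is in an information set of $C$'' is not valid --- the code generated by $(1,0)$ with $S=\{1\}$ is a counterexample. The correct reason is that $S$ lies in an information set of $\mathrm{Hull}_H(C)\subseteq C^{\perp_H}$: a nonzero codeword of $C$ supported inside $S$ would have nonzero Hermitian inner product with the hull element whose $S$-projection is a standard basis vector at a support position, a contradiction. With these two repairs the argument is complete.
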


\begin{remark}
Theorem \ref{thm-luo} is a very powerful tool to construct quaternary linear codes with one-dimensional Hermitian hull.
\end{remark}

\section{Quaternary linear codes with one-dimensional Hermitian hull related to the simplex codes}
The $k \times \frac{4^k-1}{3}$ matrix $S_k$ is defined as follows by inductive constructions.
\begin{align*}
  S_1 & =(1) \\
  S_k & =\left(\begin{array}{ccccc}
            S_{k-1} & 0 & S_{k-1} & S_{k-1} & S_{k-1} \\
            {\bf 0} & 1 & {\bf 1} & \omega{\bf 1} & \omega^2{\bf 1}
          \end{array}
  \right)~{\rm if}~k\geq 2
\end{align*}

It is well-known that the matrix $S_k$ generates a quaternary $\left[\frac{4^k-1}{3},k,4^{k-1}\right]$ simplex code, which is an one-weight Hermitian SO code for $k\geq 2$.

\begin{lem}\label{lem-1}
Let $C$ be a quaternary linear $[n,k,d]$ code with generator matrix $G$ for $k\geq 2$. Then $C$ has
one-dimensional Hermitian hull if and only if the code $C_0$ with the following generator matrix
$$G_0=[\underbrace{S_k~|~\cdots~|~S_k}_s|~G ]$$
is a quaternary linear $\left[\frac{4^k-1}{3}s+ n,k,4^{k-1}s+ d\right]$ code with one-dimensional Hermitian hull.
\end{lem}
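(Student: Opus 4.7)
The plan is to reduce everything to three clean observations about the simplex matrix $S_k$: (i) for $k\geq 2$ it generates a one-weight code of constant weight $4^{k-1}$, (ii) it is Hermitian self-orthogonal, i.e.\ $S_k\overline{S_k}^T=0$, and (iii) concatenating copies of $S_k$ to a generator matrix neither changes the rank nor the Gram matrix $G\overline{G}^T$. All three facts are already stated or immediate from the excerpt, so the argument is essentially a bookkeeping one.

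First I would check that $C_0$ has the claimed length, dimension, and minimum distance. The length is obvious from the block decomposition of $G_0$. For the dimension, note that $G$ already has rank $k$, so the larger matrix $G_0$ (same $k$ rows, more columns) has rank exactly $k$. For the distance, every nonzero codeword of $C_0$ is of the form
\[
xG_0 \;=\; (\,xS_k\,|\,xS_k\,|\,\cdots\,|\,xS_k\,|\,xG\,), \qquad 0\ne x\in \F_4^k,
\]
and since $S_k$ generates a one-weight code with weight $4^{k-1}$, one has $\mathrm{wt}(xS_k)=4^{k-1}$ for every nonzero $x$. Hence $\mathrm{wt}(xG_0)=s\cdot 4^{k-1}+\mathrm{wt}(xG)\geq s\cdot 4^{k-1}+d$, with equality realised by any $x$ achieving the minimum weight of $C$. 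So $C_0$ is indeed an $\bigl[\tfrac{4^k-1}{3}s+n,\,k,\,4^{k-1}s+d\bigr]$ code.

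Next I would handle the hull equivalence, which is the real content of the lemma. Using block multiplication,
\[
G_0\,\overline{G_0}^{\,T} \;=\; \sum_{i=1}^{s} S_k\,\overline{S_k}^{\,T} \;+\; G\,\overline{G}^{\,T}.
\]
Because $S_k$ generates a Hermitian self-orthogonal code for $k\geq 2$ (equivalently, by Proposition~\ref{so}, because its unique nonzero weight $4^{k-1}$ is even), we have $S_k\overline{S_k}^{\,T}=0$, so $G_0\overline{G_0}^{\,T}=G\overline{G}^{\,T}$. Applying the rank formula of Proposition~3.1 to both $G$ and $G_0$ yields
\[
\dim(\mathrm{Hull}_H(C_0))\;=\;k-\mathrm{rank}(G_0\overline{G_0}^{\,T})\;=\;k-\mathrm{rank}(G\overline{G}^{\,T})\;=\;\dim(\mathrm{Hull}_H(C)).
\]
In particular, $\dim(\mathrm{Hull}_H(C))=1$ if and only if $\dim(\mathrm{Hull}_H(C_0))=1$, which gives the stated equivalence.

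There is no serious obstacle: everything hinges on the two properties of $S_k$ (constant weight and self-orthogonality) that are already supplied just before the lemma, and on the Gram-matrix rank formula of Proposition~3.1. If anything requires care, it is only to notice that the equivalence must run in both directions and that the distance claim uses the fact that $S_k$ is a \emph{constant}-weight code (not merely that its minimum weight is $4^{k-1}$), so that $\mathrm{wt}(xS_k)$ contributes the full $s\cdot 4^{k-1}$ for every nonzero $x$, including those achieving weight $d$ in $xG$.
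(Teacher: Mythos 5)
Your proposal is correct and follows essentially the same route as the paper's own proof: both rest on the fact that $S_k$ generates a one-weight Hermitian self-orthogonal code, which gives $G_0\overline{G_0}^{\,T}=G\overline{G}^{\,T}$ (hence equal hull dimensions via the rank formula) and pins down the minimum distance as exactly $4^{k-1}s+d$. You merely spell out the bookkeeping (rank preservation, the block computation, and the equality case in the distance bound) that the paper leaves implicit.
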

\begin{proof}
It is well-known that $S_k$ generates a quaternary simplex code, which is an one-weight Hermitian SO $\left[\frac{4^k-1}{3},k,4^{k-1}\right]$ Griesmer code. So
$$G_0\overline{G_0}^T=G\overline{G}^T.$$
Therefore, $C$ has one-dimensional Hermitian hull if and only if $C_0$ has one-dimensional Hermitian hull.
Since $C$ has the minimum distance $d$, $C_0$ has the minimum distance at least $d+4^{k-1}s$. Since the simplex code is an one-weight code, there is at least a codeword of weight $d+4^{k-1}s$ in $C_0$. The converse is also true. This completes the proof.
\end{proof}

\begin{lem}\label{lem-rank}
Let $C_0$ be a quaternary linear $\left[N,K,D\right]$ code with one-dimensional Hermitian hull and the generator matrix
$$G_0=[\underbrace{S_k~|~\cdots~|~S_k}_s|~G ].$$
 If $D>4^{k-1}s$, then the matrix $G$ generates a quaternary linear $[n,k,d]$ code $C$ with one-dimensional Hermitian hull.
\end{lem}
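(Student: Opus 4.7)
The plan is to verify three things in order: that $G$ has full row rank $k$, that the code $C$ generated by $G$ has one-dimensional Hermitian hull, and that the minimum distance of $C$ equals $D - 4^{k-1}s$.

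First I would establish the rank statement by contradiction. Suppose $\mathbf{a} G = \mathbf{0}$ for some nonzero $\mathbf{a}\in \F_4^k$. Because $S_k$ generates a one-weight simplex code whose only nonzero weight is $4^{k-1}$, the vector $\mathbf{a} S_k$ is nonzero with weight exactly $4^{k-1}$. Hence the codeword $\mathbf{a} G_0 = [\mathbf{a} S_k~|~\cdots~|~\mathbf{a} S_k~|~\mathbf{a} G]$ of $C_0$ would be nonzero with weight exactly $s\cdot 4^{k-1}$, contradicting the hypothesis $D > 4^{k-1}s$. Thus $G$ has rank $k$, so $C$ has dimension $k$. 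The same weight decomposition, applied to an arbitrary nonzero $\mathbf{a}$, gives $w(\mathbf{a} G_0) = s\cdot 4^{k-1} + w(\mathbf{a} G)$, from which the minimum distance identity $d = D - 4^{k-1}s$ follows immediately.

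Next, for the Hermitian hull dimension, I would exploit that the simplex code generated by $S_k$ is Hermitian self-orthogonal (as noted in the paragraph preceding Lemma \ref{lem-1}), so $S_k\overline{S_k}^T = 0$. Expanding the Gram matrix of $G_0$ block-wise yields
\[
G_0\overline{G_0}^T \;=\; s\cdot S_k\overline{S_k}^T + G\overline{G}^T \;=\; G\overline{G}^T.
\]
Since $C_0$ has one-dimensional Hermitian hull, Proposition 2.1 gives ${\rm rank}(G_0\overline{G_0}^T) = k-1$, hence ${\rm rank}(G\overline{G}^T) = k-1$. Applying Proposition 2.1 once more, now to $G$, we conclude $\dim({\rm Hull}_{\rm H}(C)) = 1$.

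The argument is essentially the converse of the computation used to prove Lemma \ref{lem-1}, and I do not anticipate any significant technical obstacle. The only point that requires care is the rank step: the strict inequality $D > 4^{k-1}s$ is precisely what rules out a nontrivial linear dependence among the rows of $G$ producing a codeword of weight $s\cdot 4^{k-1}$ in $C_0$, so this hypothesis cannot be weakened to an equality.
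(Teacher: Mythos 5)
Your proposal is correct and follows essentially the same route as the paper: the paper reduces everything to Lemma \ref{lem-1} and then proves ${\rm rank}(G)=k$ by exactly your contradiction argument (a nontrivial dependence among the rows of $G$ would force a nonzero codeword of $C_0$ of weight $4^{k-1}s$, contradicting $D>4^{k-1}s$; the paper writes $2^{k-1}s$, an evident typo). Your only divergence is that you re-derive the Gram-matrix identity and the weight decomposition explicitly rather than citing Lemma \ref{lem-1}, which is harmless.
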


\begin{proof}
By Lemma \ref{lem-1}, we just verify that $C$ has $4^k$ codewords, i.e., ${\rm rank}(G)=k$.
Assume that ${\rm rank}(G)<k$. Since $S_k$ generates an one-weight code, we obtain that $D=2^{k-1}s$, which is a contradiction. This completes the proof.
\end{proof}

Let $h_{k,i}$ be the $i$-th column of the matrix $S_k$. Let $G_{k}({\bf m})$ be a $k\times \sum_{i=1}^{\frac{4^k-1}{3}}m_i$ matrix which consists of $m_i$ columns $h_{k,i}$ for each $i$ as follows:
$$G_{k}({\bf m})=(\underbrace{h_{k,1},\ldots,h_{k,1}}_{m_1},\ldots,
\underbrace{h_{k,\frac{4^k-1}{3}},\ldots,h_{k,\frac{4^k-1}{3}}}_{m_{\frac{4^k-1}{3}}}),$$
where ${\bf m}=(m_1,\ldots,m_{\frac{4^k-1}{3}})$ and $m_i$ is a nonnegative integer.
For a quaternary linear $[n,k,d]$ code with $d(C^{\perp_H})\geq 2$, there exists a vector
${\bf m}=(m_1,\ldots,m_{\frac{4^k-1}{3}})$ such that $C$ is equivalent to the code $C_k({\bf m})$ with the generator matrix $G_k({\bf m})$.

Araya, Harada and Saito \cite{IT-1} proposed an interesting and useful result in order to characterize optimal quaternary Hermitian LCD codes. Now we use this result to characterize optimal quaternary linear codes with one-dimensional Hermitian hull.

\begin{lem}{\rm \cite[Lemma 7]{IT-1}}\label{lem-t}
Suppose that $k\geq 3$, ${\bf m}=(m_1,\ldots,m_{\frac{4^k-1}{3}})$ and $\sum_{i=1}^{\frac{4^k-1}{3}}=n$. If the code $C_k({\bf m})$ has the minimum weight at least $d$, then
$$4d-3n\leq m_i\leq n-\frac{4^{k-1}-1}{3\cdot 4^{k-2}}d,$$
for each $i\in \{1,2,\ldots,\frac{4^k-1}{3}\}$.
\end{lem}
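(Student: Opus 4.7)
The plan is to exploit the fact that the columns $h_{k,1},\dots,h_{k,\frac{4^k-1}{3}}$ are representatives of the $\frac{4^k-1}{3}$ distinct one-dimensional subspaces of $\F_4^k$, so that the weights of codewords of $C_k({\bf m})$ are controlled by incidences between these lines and hyperplanes of $\F_4^k$. Writing $L_j=\langle h_{k,j}\rangle$, the $j$-th coordinate of the codeword $v\,G_k({\bf m})$ vanishes precisely when $v\cdot h_{k,j}=0$, i.e.\ when $L_j$ lies in the hyperplane $\ker v=\{x\in\F_4^k:v\cdot x=0\}$. Setting $a(v):=\sum_{j:\,L_j\subseteq\ker v}m_j$, one then has the clean identity ${\rm wt}(v\,G_k({\bf m}))=n-a(v)$, and the minimum-distance hypothesis becomes $a(v)\le n-d$ for every nonzero $v\in\F_4^k$.

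Both bounds will follow by double counting over carefully chosen subsets of nonzero $v$. For the upper bound on $m_i$, I would sum $a(v)$ over $V_i:=\{v\in\F_4^k\setminus\{0\}:v\cdot h_{k,i}=0\}$, a set of cardinality $4^{k-1}-1$. The index $j=i$ contributes $m_i$ for every $v\in V_i$; for $j\neq i$, since $h_{k,i}$ and $h_{k,j}$ are linearly independent, the paired condition $v\cdot h_{k,i}=v\cdot h_{k,j}=0$ cuts $V_i$ down to $4^{k-2}-1$ vectors, contributing $m_j(4^{k-2}-1)$. Summing, and applying $a(v)\le n-d$ on each of the $4^{k-1}-1$ terms, gives
\begin{equation*}
m_i(4^{k-1}-1)+(n-m_i)(4^{k-2}-1)\;\leq\;(4^{k-1}-1)(n-d),
\end{equation*}
which simplifies to $m_i\cdot 3\cdot 4^{k-2}\leq n\cdot 3\cdot 4^{k-2}-(4^{k-1}-1)d$, i.e.\ the desired upper bound $m_i\leq n-\frac{4^{k-1}-1}{3\cdot 4^{k-2}}\,d$.

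For the lower bound I would average in the complementary direction, summing ${\rm wt}(v\,G_k({\bf m}))$ over the $3\cdot 4^{k-1}$ nonzero $v$ with $v\cdot h_{k,i}\neq 0$. An analogous incidence count shows that the index $j=i$ contributes nothing to $\sum a(v)$ (since $v\cdot h_{k,i}\neq 0$ forces $L_i\not\subseteq\ker v$), while each $j\neq i$ contributes $3\cdot 4^{k-2}$, so the total weight over this set of $v$ equals $3\cdot 4^{k-1}n-3\cdot 4^{k-2}(n-m_i)=3\cdot 4^{k-2}(3n+m_i)$. Comparing with the trivial lower bound $3\cdot 4^{k-1}d$ coming from ${\rm wt}(v\,G_k({\bf m}))\ge d$ and dividing through by $3\cdot 4^{k-2}$ yields $3n+m_i\ge 4d$, i.e.\ $m_i\ge 4d-3n$.

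The heart of the argument is the identity ${\rm wt}(v\,G_k({\bf m}))=n-a(v)$, which converts the lemma into elementary incidence counting for pairs of linearly independent hyperplanes in $\F_4^k$. No serious obstacle is anticipated; the hypothesis $k\ge 3$ is used only so that $k-2\ge 1$ and the codimension-two subspaces appearing in the counts have the expected size $4^{k-2}$. The one step requiring foresight is the choice of the two averaging sets, the one containing $i$ producing the upper bound and the complementary one producing the lower bound.
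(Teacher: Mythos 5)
Your proof is correct: the identity $\mathrm{wt}(vG_k(\mathbf{m}))=n-a(v)$ together with the two incidence counts (over the $4^{k-1}-1$ nonzero $v$ with $v\cdot h_{k,i}=0$ for the upper bound, and the $3\cdot 4^{k-1}$ with $v\cdot h_{k,i}\neq 0$ for the lower bound) checks out, and both inequalities simplify exactly as claimed. The paper itself offers no proof of this lemma — it is quoted verbatim from \cite[Lemma 7]{IT-1} — and your double-counting/Plotkin-type summation is essentially the standard argument used there, with the only implicit assumption (as in the source) being that $G_k(\mathbf{m})$ has rank $k$, so that $a(v)\leq n-d$ indeed holds for every nonzero $v$.
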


\begin{theorem}\label{thm-no}
If there is no quaternary linear $[4r,3,3r]$ code $C$ with one-dimensional Hermitian hull and $d(C^{\perp_H})\geq 2$, then there is no quaternary linear $[21s+4r,3,16s +3r]$ code $C'$ with one-dimensional Hermitian hull and $d(C'^{\perp_H})\geq 2$ for nonnegative integer $s$.
\end{theorem}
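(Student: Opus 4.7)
The plan is to argue by contraposition: assume a quaternary linear $[21s+4r,3,16s+3r]$ code $C'$ with one-dimensional Hermitian hull and $d(C'^{\perp_H})\geq 2$ exists, and then strip off $s$ copies of the simplex generator $S_3$ from a standard-form generator to produce a $[4r,3,3r]$ code with the same hull/dual-distance properties, contradicting the hypothesis. The case $r=0$ is degenerate (one cannot have a $3$-dimensional code of length $0$) and will be handled separately (or absorbed as vacuous), so below I take $r\geq 1$.

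Since $d(C'^{\perp_H})\geq 2$, no coordinate of $C'$ is identically zero, so by the discussion preceding Lemma~\ref{lem-t}, after reordering we may write $C'\cong C_3(\mathbf{m})$ for some $\mathbf{m}=(m_1,\dots,m_{21})$ with $\sum_i m_i=21s+4r$, i.e.\ each column of the generator of $C'$ is one of the columns $h_{3,i}$ of $S_3$ with multiplicity $m_i$. Applying Lemma~\ref{lem-t} with $k=3$, $n=21s+4r$, $d=16s+3r$, the lower bound becomes
\begin{equation*}
m_i\;\geq\;4d-3n\;=\;4(16s+3r)-3(21s+4r)\;=\;s\qquad\text{for all }i\in\{1,\dots,21\}.
\end{equation*}
(The upper bound $m_i\leq n-\tfrac{5}{4}d=s+\tfrac{r}{4}$ is automatic and will not be needed.)

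Because $m_i\geq s$ for every $i$, we can permute columns so that the generator matrix of $C'$ takes the block form
\begin{equation*}
G_0\;=\;[\,\underbrace{S_3\mid\cdots\mid S_3}_{s\text{ copies}}\mid G\,],
\end{equation*}
where $G=G_3(\mathbf{m}')$ with $\mathbf{m}'=(m_1-s,\dots,m_{21}-s)$ has $\sum_i(m_i-s)=4r$ columns, each a column of $S_3$. Now apply Lemma~\ref{lem-rank}: since the minimum distance $D=16s+3r$ of $C'$ exceeds $4^{k-1}s=16s$ (using $r\geq 1$), $G$ generates a quaternary linear $[4r,3,d]$ code $C$ with one-dimensional Hermitian hull. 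By Lemma~\ref{lem-1}, $d=D-16s=3r$. Finally, every column of $G$ is one of the nonzero vectors $h_{3,i}$, so no coordinate of $C$ is identically zero and therefore $d(C^{\perp_H})\geq 2$. This $C$ contradicts the hypothesis, so no such $C'$ exists.

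The main obstacle is really just the arithmetic identity $4d-3n=s$ in step two: it is precisely this equality that forces \emph{every} column multiplicity to be at least $s$ and hence permits the clean peeling-off of $s$ full copies of $S_3$. Once that is in hand, Lemmas~\ref{lem-1} and~\ref{lem-rank} do all the heavy lifting, and the dual-distance condition $d(C^{\perp_H})\geq 2$ is automatic from the fact that columns of $S_3$ are nonzero.
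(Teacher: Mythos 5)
Your proof is correct and follows essentially the same route as the paper: represent $C'$ as $C_3(\mathbf{m})$, use Lemma~\ref{lem-t} to get $m_i\geq 4d-3n=s$, subtract $s$ from each multiplicity, and invoke Lemmas~\ref{lem-1} and~\ref{lem-rank} to obtain the forbidden $[4r,3,3r]$ code. Your write-up is in fact more careful than the paper's (you verify the arithmetic $4d-3n=s$, justify $d(C^{\perp_H})\geq 2$ via the nonzero columns, and flag the degenerate $r=0$ case), but the underlying argument is identical.
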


\begin{proof}
Suppose that $C'$ is a quaternary linear $[21s+4r,3,16s +3r]$ code with one-dimensional Hermitian hull and $d(C'^{\perp_H})\geq 2$. Hence there is a vector ${\bf m'}=(m'_1,\ldots,m'_{21})$ such that $C'$ is equivalent to $C_3({\bf m'})$. By Lemma \ref{lem-t}, $m'_i\geq s$. Let ${\bf m}=(m'_1-s,\ldots,m'_{21}-s)$.
According to Lemmas \ref{lem-1} and \ref{lem-rank}, the code
 $C_3({\bf m})$ is a quaternary linear $[4r,3,3r]$ code with one-dimensional Hermitian hull and $d(C^{\perp_H})\geq 2$, which is a contradiction. This completes the proof.
\end{proof}

\begin{theorem}{\rm \cite[Theorem 8]{IT-1}}\label{thm-2}
Suppose that $k\geq 3$. If $n\equiv 0~({\rm mod}~\frac{4^k-1}{3})$, then the quaternary linear code with the parameters $[n,k,D_4(n,k)]$ is Hermitian SO.
\end{theorem}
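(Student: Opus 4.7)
Write $N=\frac{4^{k}-1}{3}$ and $t=n/N$, so $n=tN$. The plan is to first pin down $D_{4}(n,k)$ exactly by matching the Griesmer bound against the $t$-fold simplex construction, and then to invoke Lemma~\ref{lem-t} to force any optimal code into the rigid form $C_{k}(\mathbf{m})$ with $\mathbf{m}$ constant, from which Hermitian self-orthogonality is immediate.

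For the upper bound on $D_{4}(n,k)$, suppose a quaternary $[n,k,d]$ code exists with $d\geq t\cdot 4^{k-1}+1$. Since $t\cdot 4^{k-1}$ is divisible by $4^{i}$ for every $i\leq k-1$, a direct check gives $\lceil d/4^{i}\rceil=t\cdot 4^{k-1-i}+1$, and summing yields $\sum_{i=0}^{k-1}\lceil d/4^{i}\rceil=tN+k=n+k>n$, contradicting the Griesmer bound. Combined with the lower bound obtained by juxtaposing $t$ copies of $S_{k}$, this yields $D_{4}(n,k)=t\cdot 4^{k-1}$.

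Now let $C$ be any $[n,k,t\cdot 4^{k-1}]$ code. I would first argue that $d(C^{\perp_{H}})\geq 2$: if some coordinate were a zero column, puncturing it would produce an $[n-1,k,t\cdot 4^{k-1}]$ code, and the Griesmer bound applied to this code would demand $n-1\geq tN=n$, a contradiction. Hence $C$ is equivalent to some $C_{k}(\mathbf{m})$ with $\sum_{i}m_{i}=n$. Apply Lemma~\ref{lem-t} with $d=t\cdot 4^{k-1}$: one computes $4d-3n=4t\cdot 4^{k-1}-3tN=t$, so $m_{i}\geq t$ for every $i\in\{1,\ldots,N\}$. But there are exactly $N$ such coordinates and $\sum m_{i}=tN$, so each $m_{i}$ equals $t$. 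Consequently $C$ is equivalent to the juxtaposition of $t$ copies of $S_{k}$, a two-weight code with weights in $\{0,t\cdot 4^{k-1}\}$, both even (since $k\geq 3$ makes $4^{k-1}$ even). By Proposition~\ref{so}, $C$ is Hermitian self-orthogonal.

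The main obstacle in this plan is the reduction to the $C_{k}(\mathbf{m})$ normal form, which depends on ruling out zero columns in an arbitrary optimal generator matrix; this is handled cleanly by the tightness of the Griesmer bound at exactly $d=t\cdot 4^{k-1}$. Once $d(C^{\perp_{H}})\geq 2$ is secured, the remaining step is a pure averaging argument on the multiplicity vector $\mathbf{m}$ and is essentially forced by Lemma~\ref{lem-t}.
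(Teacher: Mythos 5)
Your proof is correct and follows essentially the same route the paper relies on: the paper's own ``proof'' simply defers to the argument of \cite[Theorem 8]{IT-1}, which is exactly the combination you carry out --- the Griesmer bound to pin down $D_4(n,k)=t\cdot 4^{k-1}$ and to rule out zero columns (so $d(C^{\perp_H})\geq 2$), followed by Lemma~\ref{lem-t} with $4d-3n=t$ to force $m_i=t$ for every $i$, whence the optimal code is the $t$-fold simplex code, which is even and hence Hermitian SO by Proposition~\ref{so}. Your write-up is a complete and accurate reconstruction of that argument.
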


\begin{proof}
The proof is straightforward by the proof of \cite[Theorem 8]{IT-1}.
\end{proof}

\begin{lem}\label{lem-D-2}
If there is a quaternary linear $[n,k,d]$ code with one-dimensional Hermitian hull for $n>\frac{4^k-1}{3}+1$ and $d>2$, then there is a quaternary linear $[n-2,k,d^*\geq d-2]$ code with one-dimensional Hermitian hull.
\end{lem}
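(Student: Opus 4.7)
The plan is to puncture $C$ on a carefully chosen pair of coordinates $\{i,j\}$ whose columns $g_i,g_j$ in a fixed generator matrix $G$ are scalar multiples of one another. The payoff is a direct computation: if $g_j=\alpha g_i$ with $\alpha\in\F_4^*$, then
$$g_i\overline{g_i}^T + g_j\overline{g_j}^T=(1+\alpha\overline{\alpha})\,g_i\overline{g_i}^T=(1+\alpha^3)\,g_i\overline{g_i}^T=0,$$
since $\alpha^3=1$ and $1+1=0$ in $\F_4$. Letting $G'$ denote $G$ with columns $i,j$ removed, we get $G'\overline{G'}^T=G\overline{G}^T$, so the two matrices have equal rank, and by the hull-dimension formula \cite[Proposition 3.2]{DCC-hull} the punctured code $C^{\{i,j\}}$ still has one-dimensional Hermitian hull, provided $\mathrm{rank}(G')=k$.

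To find such a pair with $\mathrm{rank}(G')=k$, I would run a pigeonhole argument in $PG(k-1,\F_4)$: the $n$ nonzero columns of $G$ (after harmlessly discarding any zero columns, which affect neither the dimension nor the minimum distance nor the Hermitian hull) represent a multiset drawn from its $\frac{4^k-1}{3}$ projective points, and the hypothesis $n>\frac{4^k-1}{3}+1$ forces either some point to have multiplicity at least $3$, or at least two distinct points to have multiplicity exactly $2$. In the first case, take $g_i,g_j$ to be two of the columns on the heavy point; a third column in the same direction preserves $\mathrm{rank}(G')=k$. Otherwise, let $u_1,\ldots,u_a$ ($a\geq 2$) be the doubled directions and $H$ the span of the singleton directions. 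Suppose no removal of any doubled pair preserves the rank; then each $u_t$ is linearly independent of the remaining $u_s$'s together with $H$, so the $u_t$'s are linearly independent modulo $H$. Setting $r=\dim H$ gives $a=k-r$ and $n=2a+b\leq 2(k-r)+\frac{4^r-1}{3}$, which a direct calculation shows is strictly less than $\frac{4^k-1}{3}+2$ for every admissible $r$ and $k\geq 1$, contradicting the hypothesis. Hence some doubled direction can be safely removed.

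Once $(g_i,g_j)$ is chosen with $\mathrm{rank}(G')=k$, the puncturing map $C\to C^{\{i,j\}}$ is an isomorphism of $\F_4$-vector spaces, so every nonzero codeword of $C^{\{i,j\}}$ has weight at least $d-2$, yielding $d^*\geq d-2$; the assumption $d>2$ guarantees $d^*\geq 1$ so that the resulting code is nondegenerate. The main obstacle lies in the rank-preservation step of the pigeonhole argument: a naive choice of a proportional pair can collapse the column rank when both doubled directions are \emph{essential} in the sense above, and one has to quantitatively rule out this pathology using the bound $n>\frac{4^k-1}{3}+1$. Everything else reduces to the two lines of Hermitian linear algebra displayed above and the standard puncturing inequality.
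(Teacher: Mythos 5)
Your proposal is correct and follows essentially the same route as the paper: locate two proportional columns ${\bf v},\alpha{\bf v}$ by pigeonhole on the $\frac{4^k-1}{3}$ points of $PG(k-1,\F_4)$, observe that their joint contribution to $G\overline{G}^T$ is $(1+\alpha\overline{\alpha}){\bf v}\overline{{\bf v}}^T=0$, and delete them, so the hull dimension is preserved by the rank formula. The only real difference is that your projective counting argument for rank preservation is far more work than needed (and quietly assumes all columns are nonzero, an edge case you should dispatch separately): if $\mathrm{rank}(G')<k$ there is a nonzero codeword supported on the two deleted coordinates, hence of weight at most $2$, contradicting $d>2$ --- which is precisely the role of that hypothesis.
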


\begin{proof}
Let $C$ be a quaternary linear $[n,k,d]$ code with one-dimensional Hermitian hull and the generator matrix $G$. Since $n>\frac{4^k-1}{3}+1$, the matrix $G$ must have the two columns of the scalar multiple. Without loss of generality, we assume that
$$G=\left[G'~|~{\bf v}^T~|~\alpha {\bf v}^T\right],~{\rm where}~\alpha\in \F^*_4,~{\bf v}\in\F_4^k.$$
It can be checked that $G'\overline{G'}^T=G\overline{G}^T$.
This implies that $${\rm rank}\left(G'\overline{G'}^T\right)={\rm rank}\left(G\overline{G}^T\right)=k-1.$$
Hence the matrix $G'$ generates a quaternary linear $[n-2,k,d^*\geq d-2]$ code with one-dimensional Hermitian hull.
\end{proof}

\section{Optimal quaternary linear $[n,k]$ codes with one-dimensional Hermitian hull}

In this section, we characterize the optimal quaternary linear $[n,k]$ codes with one-dimensional Hermitian hull for $k\in \{1,2,3,n-1,n-2,n-3\}$.

\subsection{Optimal quaternary linear $[n,1]$ and $[n,n-1]$ codes with one-dimensional Hermitian hull}

\begin{theorem}\label{th-k=1}
Suppose that $n\geq 2$. Then
\begin{center}
$
D_4^H(n,1,1)=\left\{
\begin{array}{ll}
n-1,&{\rm if}~n~{\rm is~odd},\\
n,&{\rm if}~n~{\rm is~even}.
\end{array}\right.~
D_4^H(n,n-1,1)=\left\{
\begin{array}{ll}
1,&{\rm if}~n~{\rm is~odd},\\
2,&{\rm if}~n~{\rm is~even}.
\end{array}\right.
$
\end{center}
\end{theorem}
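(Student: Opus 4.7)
The plan is to handle the two cases separately, exploiting the fact (from the Proposition recalled just after Definition 2.1 of the paper) that $\dim(\mathrm{Hull}_H(C))=\dim(\mathrm{Hull}_H(C^{\perp_H}))$, which reduces the $[n,n-1]$ computation to analyzing the Hermitian dual of a $1$-dimensional Hermitian hull code.

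For the $[n,1]$ case, the key observation is that a $1$-dimensional code has Hermitian hull of dimension $0$ or $1$, and the hull is $1$-dimensional exactly when $C\subseteq C^{\perp_H}$, i.e. $C$ is Hermitian self-orthogonal. By Proposition \ref{so}, $C=\langle \mathbf{g}\rangle$ is Hermitian SO iff every codeword has even weight. Since all nonzero codewords are scalar multiples $\alpha\mathbf{g}$ with $\alpha\in\F_4^*$ and share the weight of $\mathbf{g}$, this reduces to the single condition that $\mathrm{wt}(\mathbf{g})$ is even. The largest even weight not exceeding $n$ is $n$ when $n$ is even and $n-1$ when $n$ is odd; achievability is witnessed by $\mathbf{g}=(1,1,\ldots,1)$ in the even case and $\mathbf{g}=(1,1,\ldots,1,0)$ in the odd case.

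For the $[n,n-1]$ case, the duality of hull dimensions says that a $[n,n-1]$ code $C$ has one-dimensional Hermitian hull iff its Hermitian dual, a $[n,1]$ code, does. Thus the admissible $C$ are precisely the Hermitian duals of the even-weight $1$-dimensional codes characterized above, and $\mathbf{g}$ plays the role of (essentially) a parity-check vector: $\mathbf{c}\in C$ iff $\sum c_i\overline{g_i}=0$. The minimum distance analysis then proceeds coordinate-wise. If some $g_i=0$, then the standard basis vector $e_i$ lies in $C$, so $d(C)=1$; if all $g_i$ are nonzero, the equation $c_ig_i^2+c_jg_j^2=0$ has a nontrivial solution with support of size $2$, so $d(C)\leq 2$, and clearly $d(C)\geq 2$ since no single coordinate suffices. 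For $n$ even, $\mathbf{g}=(1,1,\ldots,1)$ has even weight and no zero entries, giving $d(C)=2$; for $n$ odd, the even-weight constraint forces at least one $g_i$ to vanish, forcing $d(C)=1$.

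Neither step presents a serious obstacle: both cases are essentially parity-of-weight arguments combined with the standard correspondence between minimum distance of a code and linear dependencies among columns of its parity-check matrix. The only point that needs care is to verify that the constraint "one-dimensional" (rather than "at most one-dimensional") hull is indeed realized — this is immediate in the $[n,1]$ case where the hull dimension is either $0$ or $1$, and it follows in the $[n,n-1]$ case via the $\dim(\mathrm{Hull}_H(C))=\dim(\mathrm{Hull}_H(C^{\perp_H}))$ identity, so the same constructions work on both sides.
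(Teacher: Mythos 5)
Your proof is correct and follows essentially the same route as the paper's: both rest on the observation (via Proposition \ref{so}) that a one-dimensional quaternary code has one-dimensional Hermitian hull precisely when its generator has even weight, and both transfer the $[n,n-1]$ case to the $[n,1]$ case through the identity $\dim(\mathrm{Hull}_{\mathrm H}(C))=\dim(\mathrm{Hull}_{\mathrm H}(C^{\perp_H}))$. The only cosmetic difference is in ruling out an $[n,n-1,2]$ code for odd $n$: the paper invokes MDS duality to contradict the nonexistence of an $[n,1,n]$ code with one-dimensional hull, whereas you read the minimum distance directly off the parity-check vector, which must have a zero coordinate because its weight is even while $n$ is odd.
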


\begin{proof}
By the Griesmer bound, we have $D_4^H(n,1,1)\leq n$ and $D_4^H(n,n-1,1)\leq 2$.
Assume that $n$ is odd. If $C_0$ is a quaternary linear $[n,1,n]$ code, then it follows from Theorem \ref{so} that $C_0$ does not have one-dimensional Hermitian hull. The code $C$ generated by $[0 1 1 \ldots 1]$ is a linear $[n,1,n-1]$ code with one-dimensional Hermitian hull. So $D_4^H(n,1,1)=n-1$.
The dual code $C^{\perp_H}$ of $C$ is a quaternary $[n,n-1,1]$ linear code with one-dimensional Hermitian hull.
If there exists a quaternary linear $[n,n-1,2]$ MDS code with one-dimensional Hermitian hull, then its Hermitian dual code is a quaternary linear $[n,1,n]$ MDS code with one-dimensional Hermitian hull. This contradicts that $D_4^H(n,1,1)=n-1$. Thus $D_4^H(n,n-1,1)=1$.

Assume that $n$ is even. The repetition $[n,1,n]$ code and its dual code are linear codes with one-dimensional Hermitian hull. Hence $D_4^H(n,1,1)=n$ and $D_4^H(n,n-1,1)=2$.
\end{proof}

\subsection{Optimal quaternary linear $[n,2]$ codes with one-dimensional Hermitian hull}

Mankean and Jitman \cite{AJM-conjecture} proposed a conjecture on the minimum distance of quaternary linear codes with one-dimensional Hermitian hull.

\begin{theorem}{\rm \cite[Conjecture 5.2]{AJM-conjecture}}\label{conjecture}
Let $n\geq 3$ be an integer. Then $D_4^H(n,2,1)=\left\lfloor \frac{4n}{5} \right\rfloor-1$ for all positive integers $n\equiv 0,3~({\rm mod}~5)$.
\end{theorem}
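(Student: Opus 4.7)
The plan is to parameterize quaternary $[n,2]$ codes with $d(C^{\perp_H})\geq 2$ by a five-tuple ${\bf m}=(m_1,\ldots,m_5)$ recording the multiplicity of each of the five projective columns of the simplex matrix $S_2$ (as in Section~3), and to read both the minimum distance and the Hermitian hull dimension directly from ${\bf m}$. Since the Griesmer bound rules out any zero columns once $d\geq\lfloor 4n/5\rfloor-1$, this parametrization is harmless. A direct count of codewords gives $d(C_2({\bf m}))=n-\max_i m_i$, because for each of the five projective directions the three nonzero scalar multiples of the corresponding dual functional produce three codewords of weight $n-m_i$.

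The key technical step is a parity characterization of the hull. Expanding
\[
G_2({\bf m})\,\overline{G_2({\bf m})}^{T}=\sum_{i=1}^{5} m_i\, h_{2,i}\overline{h_{2,i}}^{T}
\]
and using $\omega+\omega^2=1$ and $\omega^3=1$, one checks that every entry of this $2\times 2$ matrix over $\F_4$ is determined by the residues $\bar m_i:=m_i\bmod 2$. A short manipulation then yields $\det(G_2({\bf m})\,\overline{G_2({\bf m})}^{T})\equiv\sigma_2(\bar m_1,\ldots,\bar m_5)\equiv\binom{w}{2}\pmod 2$, where $w:=|\{i:\bar m_i=1\}|$, while the matrix itself vanishes exactly when all $\bar m_i$ coincide. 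Hence, by the rank formula for the Hermitian hull, $\dim\,{\rm Hull}_{\rm H}(C_2({\bf m}))=1$ if and only if $w\in\{1,4\}$; the cases $w\in\{0,5\}$ and $w\in\{2,3\}$ give Hermitian self-orthogonal and Hermitian LCD codes respectively.

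With this dictionary the upper bound is immediate. For $n=5m$ and target distance $4m$ the constraint $\max_i m_i\leq m$ together with $\sum_i m_i=5m$ forces $m_i=m$ for every $i$, so $w\in\{0,5\}$ and no one-dimensional hull is possible. For $n=5m+3$ and target distance $4m+2$, writing $m_i=m+1-e_i$ with $e_i\geq 0$ and $\sum_i e_i=2$ leaves only the two multisets $\{m-1,m+1,m+1,m+1,m+1\}$ and $\{m,m,m+1,m+1,m+1\}$, which give $w\in\{0,5\}$ and $w\in\{2,3\}$ respectively; again neither yields $w\in\{1,4\}$. For the matching lower bounds, ${\bf m}=(2,2,1,0,0)$ produces a $[5,2,3]$ code with $w=1$ and ${\bf m}=(2,1,0,0,0)$ produces a $[3,2,1]$ code with $w=1$, both with one-dimensional Hermitian hull by the parity criterion. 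Lemma~\ref{lem-1} then lifts these two base codes, with $s=m-1$ and $s=m$ respectively, to $[5m,2,4m-1]$ and $[5m+3,2,4m+1]$ codes with one-dimensional Hermitian hull, matching $\lfloor 4n/5\rfloor-1$ exactly.

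The main obstacle is the parity identity $\det(G_2({\bf m})\,\overline{G_2({\bf m})}^{T})\equiv\sigma_2(\bar m_1,\ldots,\bar m_5)\pmod 2$. Unwinding it requires summing the five rank-one outer products $h_{2,i}\overline{h_{2,i}}^T$ explicitly, expanding the off-diagonal cross product $(\bar m_3+\omega^2\bar m_4+\omega\bar m_5)(\bar m_3+\omega\bar m_4+\omega^2\bar m_5)$ in $\F_4$, and collecting terms via $x^2=x$ on $\F_2$ together with $\omega+\omega^2=1$ and $\omega^3=1$. Once that identity is established, the two parity case analyses plus the two explicit base codes complete both inequalities.
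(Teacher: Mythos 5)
Your proof is correct and follows essentially the same route as the paper: parametrize by the column multiplicities $\mathbf{m}$, use the distance-forced bound $m_i\le n-\left\lfloor \frac{4n}{5}\right\rfloor$ to pin down the possible $\mathbf{m}$, read the hull dimension off $G_2(\mathbf{m})\overline{G_2(\mathbf{m})}^T$, and lift base $[5,2,3]$ and $[3,2,1]$ codes with one-dimensional Hermitian hull via Lemma \ref{lem-1}. Your parity criterion $\dim\mathrm{Hull}_{\rm H}(C_2(\mathbf{m}))=1\iff w\in\{1,4\}$, obtained from $\det\bigl(G_2(\mathbf{m})\overline{G_2(\mathbf{m})}^T\bigr)\equiv\binom{w}{2}\pmod 2$ together with the observation that the Gram matrix vanishes exactly when all $\bar m_i$ coincide (both of which I verified), is a clean compression of the paper's eleven-case Table 1 into two lines, but it is a repackaging of the same computation rather than a genuinely different argument.
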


\begin{lem}
Suppose that $n\equiv 0,3~({\rm mod}~5)$. If there is a quaternary linear $\left[n,2,\left\lfloor \frac{4n}{5} \right\rfloor\right]$ code with one-dimensional Hermitian hull, then $d(C^{\perp_H})\geq 2$.
\end{lem}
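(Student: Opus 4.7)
The plan is to argue by contradiction. Suppose the dual code $C^{\perp_H}$ contains a codeword $v$ of Hamming weight one, with its unique nonzero entry $\alpha\in \F_4\setminus\{0\}$ in some position $i$. Then for every $u\in C$ we have $\langle u,v\rangle_H = u_i\overline{\alpha} = 0$, which forces $u_i = 0$ for all $u\in C$. In other words, coordinate $i$ is identically zero on $C$, so puncturing at position $i$ produces a quaternary linear code $C^{\{i\}}$ of length $n-1$ that still has dimension $2$ (distinct codewords of $C$ already agree at the identically-zero coordinate, so none are identified by puncturing) and that retains the same minimum distance $\lfloor 4n/5\rfloor$ (deleting a zero entry preserves every weight).

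The task then reduces to ruling out the existence of a quaternary linear $[n-1,\,2,\,\lfloor 4n/5\rfloor]$ code for $n\equiv 0,3 \pmod 5$. I would apply the Griesmer bound
\[
n-1 \,\geq\, \left\lfloor \tfrac{4n}{5}\right\rfloor + \left\lceil \tfrac{1}{4}\left\lfloor \tfrac{4n}{5}\right\rfloor\right\rceil
\]
and compute both residue classes directly. Writing $n=5t$ gives right-hand side $4t + t = n$, and writing $n=5t+3$ gives right-hand side $(4t+2)+(t+1) = n$; in either case the bound demands $n-1 \geq n$, a contradiction. Hence no weight-one codeword exists in $C^{\perp_H}$, so $d(C^{\perp_H})\geq 2$.

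I do not expect a serious obstacle. The one-dimensional Hermitian hull hypothesis on $C$ plays no explicit role in the argument; the entire proof collapses to the observation that a weight-one vector in $C^{\perp_H}$ forces a wasted coordinate in $C$, combined with the fact that the Griesmer bound is already tight for the $[n,2,\lfloor 4n/5\rfloor]$ parameters in precisely these two residue classes modulo $5$, leaving no slack to absorb a puncturing by one coordinate. The only mild care needed is the arithmetic verification that $\lceil (4t+2)/4\rceil = t+1$ in the $n\equiv 3\pmod 5$ case, which is routine.
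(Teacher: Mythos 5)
Your proof is correct and follows essentially the same route as the paper's: a weight-one Hermitian dual codeword forces an identically-zero coordinate of $C$, deleting it yields an $[n-1,2,\lfloor 4n/5\rfloor]$ code, and the Griesmer bound (already tight at length $n$ for these residue classes) rules that out. You merely spell out more explicitly the step from $d(C^{\perp_H})=1$ to the zero coordinate and the residue-class arithmetic, both of which the paper leaves implicit; your observation that the hull hypothesis is never used is also accurate.
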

\begin{proof}
Suppose that $n\equiv 0,3~({\rm mod}~5)$. If there is a quaternary linear $\left[n,2,\left\lfloor \frac{4n}{5} \right\rfloor\right]$ code with one-dimensional Hermitian hull, then there is a quaternary linear $\left[n-1,2,\left\lfloor \frac{4n}{5} \right\rfloor\right]$ code with one-dimensional Hermitian hull. This contradicts the Griesmer bound.
\end{proof}

Next, we solve the conjecture by modifying the method proposed by Araya, Harada and Saito \cite{IT-1}.

{\em Proof of Theorem \ref{conjecture}.} Suppose that $n\equiv 0,3~({\rm mod}~5)$. Let $C$ be a quaternary linear $\left[n,2,\left\lfloor \frac{4n}{5} \right\rfloor\right]$ code with $d(C^{\perp_H})\geq 2$. Then there exists ${\bf m}=(m_1,m_2,m_3,m_4,m_5)$ such that the code $C$ is equivalent to the code $C_{k}({\bf m})$ with the following generator matrix

$$G_k({\bf m})=\left( \begin{array}{ccccc}
                        {\bf1}_{m_1} & {\bf0}_{m_2} & {\bf1}_{m_3} & {\bf1}_{m_4} & {\bf1}_{m_5} \\
                        {\bf0}_{m_1} & {\bf1}_{m_2} & {\bf1}_{m_3} & \omega{\bf1}_{m_4} & \omega^2{\bf1}_{m_5}
                      \end{array}
 \right).$$
Without loss of generality, we assume that $m_1\geq 1$ and $m_2\geq 1$.
Then we have
$$G_k({\bf m})\overline{G_k({\bf m})}^T=\left(
\begin{array}{cc}
  m_1+m_3+m_4+m_5 & m_3+\omega^2m_4+\omega m_5 \\
  m_3+\omega m_4+\omega^2 m_5 & m_2+m_3+m_4+m_5
\end{array}
\right).$$
The weight enumerator of the code $C$ is
\begin{align*}
  W_C(x)=&1+ 3x^{m_1+m_3+m_4+m_5}+3x^{m_2+m_3+m_4+m_5}+3x^{m_1+m_2+m_4+m_5}\\
         &+3x^{m_1+m_2+m_3+m_5}+3x^{m_1+m_2+m_3+m_4}.
\end{align*}
Since the code $C$ has the length $n$ and the minimum distance $\left\lfloor \frac{4n}{5} \right\rfloor$,
\begin{center}
$\sum_{i=1}^5m_i  =n$~~and~~$\sum_{i\in \{1,2,3,4,5\}\backslash \{j\}}m_i  \geq \left\lfloor \frac{4n}{5} \right\rfloor,~j=1,2,3,4,5.$
\end{center}
Hence we have
\begin{align}\label{eq-1}
  m_i & \leq n- \left\lfloor \frac{4n}{5} \right\rfloor,~i=1,2,3,4,5.
\end{align}

\begin{itemize}
  \item Suppose that $n\equiv 0~({\rm mod}~5)$, i.e., $n=5s$ for some positive integer $s$.
  From (\ref{eq-1}), we have
  $m_i\leq s,~i=1,2,3,4,5.$
  Then we have
  $n=\sum_{i=1}^5m_i\leq 5s=n.$
  This implies that
  $m_1=m_2=m_3=m_4=m_5=s.$
  Hence $$G_k({\bf m})\overline{G_k({\bf m})}^T=\left(
\begin{array}{cc}
  0 & 0 \\
  0 & 0
\end{array}
\right).$$
It turns out that $C$ is Hermitian SO, which implies that
$$D^H_4(5s,2,1)\leq \left\lfloor \frac{4\times 5s}{5} \right\rfloor-1=4(s-1)+3.$$
On the other hand, there is a quaternary linear $[5,2,3]$ code with one-dimensional Hermitian hull. By Lemma \ref{lem-1}, there is a quaternary linear $[5(s-1)+5,2,4(s-1)+3]$ code with one-dimensional Hermitian hull for $s\geq 1$. Hence
$$D^H_4(5s,2,1)\geq 4(s-1)+3=\left\lfloor \frac{4\times 5s}{5} \right\rfloor-1.$$
Therefore, $D^H_4(5s,2,1)= 4(s-1)+3=\left\lfloor \frac{4\times 5s}{5} \right\rfloor-1.$
  \item Suppose that $n\equiv 3~({\rm mod}~5)$, i.e., $n=5s+3$ for some positive integer $s$. From (\ref{eq-1}), we have
  $m_i\leq s+1,~i=1,2,3,4,5.$
  Then we have
  $$n=\sum_{i=1}^5m_i\leq 5s+5=n+2.$$
  Hence we have
  \begin{align*}
    s-1\leq m_i&\leq s+1,~i=1,2,3,4,5,\\
    |\{i\in\{1,2,3,4,5\}~|~m_i=s+1\}| & = |\{i\in\{1,2,3,4,5\}~|~m_i=s-1\}|+3,\\
    |\{i\in\{1,2,3,4,5\}~|~m_i=s\}| &=2-2|\{i\in\{1,2,3,4,5\}~|~m_i=s-1\}|.
  \end{align*}
  Without loss of generality, let us assume that $m_1\leq m_2$. This yields that there are 11 possibilities for
${\bf m}=(m_1,m_2,m_3,m_4,m_5)$, where the results are listed in Table 1.
The codes in Table 1 are either Hermitian LCD or Hermitian SO.
This implies that
$$D^H_4(5s,2,1)\leq \left\lfloor \frac{4(5s+3)}{5} \right\rfloor-1=4s+1.$$
On the other hand, there is a quaternary linear $[3,2,1]$ code with one-dimensional Hermitian hull. By Lemma \ref{lem-1}, there is a quaternary linear $[5s+3,2,4s+1]$ code with one-dimensional Hermitian hull for $s\geq 0$. Hence
$$D^H_4(5s,2,1)\geq 4s+1=\left\lfloor \frac{4(5s+3)}{5} \right\rfloor-1.$$
Therefore, $D^H_4(5s,2,1)= 4s+1=\left\lfloor \frac{4(5s+3)}{5} \right\rfloor-1.$
\end{itemize}
This completes the proof.
$\hfill\qedsymbol$

\begin{center}
{\small
\begin{tabular}{c|c|c}
\multicolumn{3}{c}{{\rm Table 1: $n=5s+3$}}\\
\hline
   ${\bf m}=(m_1,m_2,m_3,m_4,m_5)$  &  $G_k({\bf m})\overline{G_k({\bf m})}^T$ & $\dim({\rm Hull_H}(C_k({\bf m})))$\\
    \hline\hline
 $(s-1,s+1,s+1,s+1,s+1)$ &  $\left(\begin{array}{cc}
  0 & 0 \\
  0 & 0
  \end{array}
\right)$ & 2\\

$(s+1,s+1,s-1,s+1,s+1)$ &  $\left(\begin{array}{cc}
  0 & s-1 \\
  s-1 & 0
  \end{array}
\right)$& 0~{\rm or}~2\\
$(s+1,s+1,s+1,s-1,s+1)$ &  $\left(\begin{array}{cc}
  0 & 0 \\
  0 & 0
  \end{array}
\right)$& 2\\
$(s+1,s+1,s+1,s+1,s-1)$ &  $\left(\begin{array}{cc}
  0 & 0 \\
  0 & 0
  \end{array}
\right)$& 2\\
$(s,s,s+1,s+1,s+1)$ &  $\left(\begin{array}{cc}
  1 & 0 \\
  0 & 1
  \end{array}
\right)$& 0\\

$(s+1,s,s,s+1,s+1)$ &  $\left(\begin{array}{cc}
  1 & 1 \\
  1 & 0
  \end{array}
\right)$& 0\\
$(s+1,s,s+1,s,s+1)$ &  $\left(\begin{array}{cc}
  1 & \omega^2 \\
  \omega & 0
  \end{array}
\right)$& 0\\
$(s+1,s,s+1,s+1,s)$ &  $\left(\begin{array}{cc}
  1 & \omega \\
  \omega^2 & 0
  \end{array}
\right)$& 0\\
$(s+1,s+1,s,s,s+1)$ &  $\left(\begin{array}{cc}
  0 & \omega \\
  \omega^2 & 0
  \end{array}
\right)$& 0\\
$(s+1,s+1,s,s+1,s)$ &  $\left(\begin{array}{cc}
  0 & \omega^2 \\
  \omega & 0
  \end{array}
\right)$& 0\\
$(s+1,s+1,s+1,s,s)$ &  $\left(\begin{array}{cc}
  0 & 1 \\
  1 & 0
  \end{array}
\right)$& 0\\
    \hline
\end{tabular}}
\end{center}

\subsection{Optimal quaternary linear $[n,3]$ codes with one-dimensional Hermitian hull}
The aim of this section is to establish the following theorem,
which is one of the main results in this paper.
\begin{theorem}\label{thm-1}
Suppose that $n\geq 4$. Then
$$D_4^H(n,3,1)=\left\{\begin{array}{ll}
                          \left\lfloor \frac{16n}{21}\right\rfloor, & if~n~\equiv~1,9,13,14,17,18,19~({\rm mod}~21), \\
                           &  \\
                          \left\lfloor \frac{16n}{21}\right\rfloor-1, & if~n~\equiv~0,2,3,4,6,7,8,10,11,12,15,16,20~({\rm mod}~21).
                        \end{array}
\right.$$
\end{theorem}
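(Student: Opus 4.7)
Write $n = 21s + r$ with $0 \le r \le 20$; then $\lfloor 16n/21 \rfloor = 16s + \lfloor 16r/21 \rfloor$, so the theorem claims $D_4^H(n,3,1) = 16s + \delta_r$, with $\delta_r \in \{\lfloor 16r/21 \rfloor,\, \lfloor 16r/21 \rfloor - 1\}$ according to whether $r$ belongs to the special list $\{1,9,13,14,17,18,19\}$ or not. The plan is to reduce both the lower and the upper bounds to a finite check at a small length inside each residue class, exploiting the simplex-lifting machinery of Section 3.

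\textbf{Lower bound.} For every residue class I would produce, once, a quaternary Hermitian one-hull $[r^{\ast}, 3, \delta_r^{\ast}]$ code of small length $r^{\ast} \equiv r \pmod{21}$ with $\delta_r^{\ast} \equiv \delta_r \pmod{16}$, either by writing down an explicit generator matrix or by drawing from the classifications cited in \cite{IT-1,DM}. Lemma \ref{lem-1} then appends $t$ copies of $S_3$ to yield a $[r^{\ast} + 21t, 3, \delta_r^{\ast} + 16t]$ Hermitian one-hull code, realizing the lower bound for every admissible $n$ in that residue class.

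\textbf{Upper bound.} Let $C$ be a Hermitian one-hull $[n, 3, d]$ code with $d = 16s + \delta_r + 1$, one more than the claimed value. If $d(C^{\perp_H}) = 1$ then $C$ has a zero column, which may be punctured; combined with Lemma \ref{lem-D-2} this reduces to a strictly shorter case. Otherwise $d(C^{\perp_H}) \ge 2$, so $C$ is equivalent to $C_3({\bf m})$ for some ${\bf m} = (m_1,\dots,m_{21})$, and Lemma \ref{lem-t} forces
\[
4d - 3n \;\le\; m_i \;\le\; n - \tfrac{5d}{16}, \qquad 1 \le i \le 21,
\]
which in particular implies $m_i \ge s$. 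Subtracting $s$ from every coordinate --- the operation codified in Theorem \ref{thm-no} --- reduces the problem to ruling out a Hermitian one-hull code with parameters $[r, 3, \delta_r + 1]$ and $d(C^{\perp_H}) \ge 2$ for each of the thirteen non-special residues.

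\textbf{Main obstacle.} The bulk of the argument is this residue-by-residue non-existence check: enumerate all vectors ${\bf m}' \in \Z_{\ge 0}^{21}$ with coordinate sum $r$ whose entries lie in the narrow window dictated by Lemma \ref{lem-t} at distance $\delta_r + 1$, and verify for each that $G_3({\bf m}') \overline{G_3({\bf m}')}^T$ fails to have rank $2$ --- meaning $C_3({\bf m}')$ is either Hermitian LCD, Hermitian self-orthogonal, or does not attain the target distance. This is the $k=3$ analogue of the five-column table used in the proof of Theorem \ref{conjecture}, but the parameter space is vastly larger; I would cut it down using the action of $\mathrm{GL}_3(\F_4)$ permuting the $21$ columns of $S_3$ and rely on computer-assisted verification in the spirit of \cite{IT-1}.
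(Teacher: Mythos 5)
Your overall architecture coincides with the paper's: explicit short base codes plus simplex lifting (Lemma \ref{lem-1}) for the lower bounds, and the column-multiplicity bound of Lemma \ref{lem-t} together with the stripping argument of Theorem \ref{thm-no} and the Griesmer bound for the upper bounds. However, there is a genuine gap in the key step of your upper-bound reduction. You claim that Lemma \ref{lem-t} ``in particular implies $m_i\geq s$'' and hence that everything reduces to ruling out a one-hull $[r,3,\delta_r+1]$ code with $r\leq 20$. The lower bound from Lemma \ref{lem-t} is $m_i\geq 4d-3n$, and for $n=21s+n_0$, $d=16s+d_0$ this equals $s+(4d_0-3n_0)$; it is $\geq s$ only when $4d_0\geq 3n_0$. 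For the parameters you must exclude this fails in most of the hard residue classes: for $r=6$ the target is $d_0=4$ and $4\cdot 4-3\cdot 6=-2$, so you only get $m_i\geq s-2$ and cannot strip off $s$ simplex blocks; similarly for $r=10,11,15$, and for $r=0$ your proposed base length is vacuous. The reduction bottoms out only at the smallest length in the residue class where $4d_0=3n_0$ holds exactly, namely at $[16,3,12]$, $[20,3,15]$, $[32,3,24]$, $[36,3,27]$, $[48,3,36]$, $[52,3,39]$, plus Theorem \ref{thm-2} for $n\equiv 0\pmod{21}$. This is why the paper must import the classifications of optimal $[n,3]$ codes at lengths up to $52$ from \cite{DM,IT-1} rather than enumerate vectors $\mathbf m$ of coordinate sum at most $20$, and it is precisely why the residue-$5$ case (base length $68$) is left open; your proposal, as written, would not detect that obstruction.

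Two smaller points. First, your window is miscomputed: for $k=3$ Lemma \ref{lem-t} gives $m_i\leq n-\tfrac{5d}{4}$, not $n-\tfrac{5d}{16}$. Second, the non-existence check is not needed for all thirteen ``minus one'' residues: for $r\in\{2,3,4,7,8,12\}$ the value $\lfloor 16n/21\rfloor-1$ already equals the Griesmer bound $D_4(n,3)$ (Table 2), so the upper bound is automatic there; only $r\in\{0,6,10,11,15,16,20\}$ require an exclusion argument. Your treatment of the $d(C^{\perp_H})=1$ case (delete the zero column and contradict the Griesmer bound at length $n-1$) and of the lower bounds is essentially the paper's.
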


By the Griesmer bound, we have
\begin{center}
\begin{tabular}{cc|cc|cc}
\multicolumn{6}{c}{{\rm Table 2: The Griesmer bound on $D(n,3),n\geq 4$}}\\
\hline
   $n$ & $D(n,3)$& $n$ & $D(n,3)$&$n$ & $D(n,3)$\\
    \hline\hline
   $21s$ &$16s$& $21s+7$ &$16s+4$ &$21s+14$ &$16s+10$ \\
   $21s+1$ &$16s$& $21s+8$ &$16s+5$&$21s+15$ &$16s+11$\\
   $21s+2$ &$16s$& $21s+9$ &$16s+6$&$21s+16$ &$16s+12$ \\
 $21s+3$ &$16s+1$& $21s+10$ &$16s+7$&$21s+17$ &$16s+12$ \\
 $21s+4$ &$16s+2$& $21s+11$ &$16s+8$&$21s+18$ &$16s+13$ \\
 $21s+5$ &$16s+3$& $21s+12$ &$16s+8$&$21s+19$ &$16s+14$ \\
 $21s+6$ &$16s+4$& $21s+13$ &$16s+9$&$21s+20$ &$16s+15$ \\
    \hline
\end{tabular}
\end{center}

Bouyukliev, Grassl and Varbanov \cite{DM} completed the classification of optimal quaternary $[n,3]$ codes for $n\leq 35$. Araya, Harada and Saito \cite{IT-1} partially completed the classification of optimal quaternary $[n,3]$ codes for $n\geq 36$.
Based on the classification, we find
$D_4^H(n,3,1)=D_4(n,3)$ for $$n=4,7,8,9,10,11,12,13,14,17,18,19,22,23,24,25,28,29,30,33,34,35.$$
We also find $D_4^H(n,3,1)<D_4(n,3)$ for
$$n=5,6,15,16,20,21,26,27,31,32,36,48,52.$$

We list some generator matrices of quaternary linear $[n,3,D^H_4(n,3,1)]$ codes with one-dimensional Hermitian hull for $n\leq 35$.

\setlength{\arraycolsep}{0.9pt}

\begin{prop}\label{prop-1}
There are quaternary linear $[4,3,2]$, $[7,3,4]$, $[8,3,5]$, $[9,3,6]$, $[12,3,8]$, $[13,3,9]$, $[14,3,10]$, $[17,3,12]$, $[18,3,13]$, $[19,3,14]$, $[22,3,16]$, $[23,3,16]$, $[24,3,17]$ codes with one-dimensional Hermitian hull, which have the following generator matrix, respectively.
$$
\begin{array}{l}
 G_{[4,3,2]}=\left( \begin{array}{cccc}
                          1  & 0  &   0  &   1\\
    0  &   1   &  0   &  1\\
    0   &  0 &    1  &   1
                     \end{array}
 \right),
 G_{[7,3,4]}=\left( \begin{array}{ccccccc}
       1 &0  & 0  & 1 &  \omega  &1&1\\
  0  & 1  & 0  & \omega^2&  \omega^2 & 0   &\omega\\
  0 &  0 &  1 & \omega  & 0  & \omega^2&   1
                     \end{array}
 \right),
 G_{[8,3,5]}=\left( \begin{array}{ccccccccc}
       1&0&0&0&\omega&\omega^2&\omega&1\\
0&1&0&1&\omega&1&0&1\\
0&0&1&1&1&1&\omega&\omega
                     \end{array}
 \right),\\
 G_{[9,3,6]}=\left( \begin{array}{cccccccccc}
       1&0&0&1&1&\omega&\omega^2&1&1\\
0&1&0&1&\omega^2&\omega^2&\omega&0&\omega\\
0&0&1&\omega&\omega&0&\omega^2&\omega^2&1
                     \end{array}
 \right),
 G_{[12,3,8]}=\left( \begin{array}{ccccccccccccc}
       1&0&0&\omega&\omega^2&\omega&\omega&\omega&\omega&\omega^2&\omega&0\\
0&1&0&1&0&\omega&\omega^2&\omega^2&\omega^2&0&\omega&\omega\\
0&0&1&1&\omega&0&1&\omega&\omega^2&\omega^2&\omega&1
                     \end{array}
 \right),\\
  G_{[13,3,9]}=\left( \begin{array}{cccccccccccccc}
       1&0&0&1&\omega&1&\omega&\omega^2&0&\omega&\omega^2&1&1\\
0&1&0&0&0&1&1&1&1&1&1&\omega&1\\
0&0&1&\omega&1&0&0&1&1&\omega&\omega&1&\omega^2
                     \end{array}
 \right),
 G_{[14,3,10]}=\left( \begin{array}{ccccccccccccccc}
      1&0&0&0&1&1&\omega&0&1&\omega&1&1&1&1\\
0&1&0&\omega&\omega&1&\omega^2&1&0&1&1&\omega&1&\omega^2\\
0&0&1&1&1&1&1&1&\omega&1&0&0&\omega&\omega
                     \end{array}
 \right),\\
  G_{[17,3,12]}=\left( \begin{array}{cccccccccccccccccc}
     1&0&0&0&1&\omega^2&\omega^2&\omega^2&\omega^2&\omega^2&\omega&\omega^2&1&1&1&0&\omega\\
0&1&0&\omega^2&\omega^2&\omega^2&\omega^2&1&\omega&0&\omega^2&0&0&\omega&\omega&\omega&1\\
0&0&1&1&\omega&\omega&1&\omega&0&\omega^2&\omega&\omega&\omega&0&\omega&1&1
                     \end{array}
 \right),\\
 G_{[18,3,13]}=\left( \begin{array}{ccccccccccccccccccc}
    1&0&0&1&1&\omega&1&\omega&0&\omega^2&\omega&\omega&\omega&0&\omega&1&1&0\\
0&1&0&1&1&\omega^2&\omega&1&\omega^2&0&0&1&\omega^2&1&\omega^2&1&\omega^2&\omega^2\\
0&0&1&1&\omega&1&\omega&\omega^2&\omega&\omega^2&1&\omega&\omega&\omega&0&\omega^2&0&\omega^2
                     \end{array}
 \right),\\
 G_{[19,3,14]}=\left( \begin{array}{cccccccccccccccccccc}
   1&0&0&1&1&\omega&\omega&1&\omega&0&\omega^2&\omega&\omega&\omega&0&\omega&1&1&0\\
0&1&0&1&1&\omega^2&1&\omega&1&\omega^2&0&0&1&\omega^2&1&\omega^2&1&\omega^2&\omega^2\\
0&0&1&1&\omega&1&1&\omega&\omega^2&\omega&\omega^2&1&\omega&\omega&\omega&0&\omega^2&0&\omega^2
                     \end{array}
 \right),\\
 G_{[22,3,16]}=\left( \begin{array}{ccccccccccccccccccccccc}
   1&0&0&\omega&\omega&1&1&1&\omega&\omega^2&1&0&\omega^2
   &\omega&\omega&\omega^2&0&1&\omega^2&\omega&1&0\\
0&1&0&1&1&\omega&\omega&1&\omega&\omega^2&0&1&\omega&
\omega^2&0&1&\omega&\omega^2&0&1&\omega&\omega^2\\
0&0&1&0&0&0&0&1&1&1&1&1&1&1&1&1&1&1&1&1&1&1
                     \end{array}
 \right),\\
 G_{[23,3,16]}=\left( \begin{array}{cccccccccccccccccccccccc}
  1&0&0&0&0&0&0&1&1&1&1&1&1&1&1&1&1&1&1&1&1&1&1\\
0&1&0&1&1&1&1&0&0&0&0&1&1&1&1&\omega&\omega&\omega&\omega&
\omega^2&\omega^2&\omega^2&\omega^2\\
0&0&1&0&1&\omega&\omega^2&0&1&\omega&\omega^2&0&1&\omega&\omega^2&0&
1&\omega&\omega^2&0&1&\omega&\omega^2
                     \end{array}
 \right),\\
G_{[24,3,17]}=\left( \begin{array}{ccccccccccccccccccccccccc}
1&0&0&\omega^2&\omega^2&\omega^2&\omega&\omega&\omega&\omega&\omega^2&\omega^2&
 \omega^2&0&0&1&\omega&\omega&\omega&1&1&0&0&0\\
0&1&0&\omega&0&1&1&\omega&\omega&1&0&\omega^2&\omega&1&\omega^2&1&\omega&0&1&
\omega&0&1&\omega^2&\omega\\
0&0&1&0&\omega&\omega^2&\omega&1&\omega&1&1&\omega&\omega^2&\omega^2&1&\omega&
\omega^2&1&0&0&\omega&\omega&0&1
        \end{array}
 \right).
\end{array}
$$
\end{prop}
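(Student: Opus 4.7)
\textbf{Proof proposal for Proposition \ref{prop-1}.} The statement lists thirteen explicit generator matrices and asserts for each that the code it generates is a quaternary linear $[n,3,d]$ code with one-dimensional Hermitian hull for the indicated parameters. Since $k=3$, the claim for each matrix reduces to three finite verifications: (i) $\mathrm{rank}(G)=3$, so that $C=\langle G\rangle$ is indeed a $[n,3]$ code of the stated length (count the columns); (ii) the minimum Hamming weight of $C$ equals the claimed $d$; and (iii) $\dim(\mathrm{Hull}_{\mathrm H}(C))=1$, which by Proposition 2.1 (the cited result from \cite{DCC-hull}) is equivalent to $\mathrm{rank}(G\overline{G}^{T})=k-1=2$. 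My plan is to perform (i)--(iii) for each of the thirteen matrices in turn.

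For step (i), each listed generator matrix visibly contains the $3\times 3$ identity block in its first three columns, so $\mathrm{rank}(G)=3$ is immediate and the length equals the total number of columns, which one can read off and match against the target $n$. For step (iii), the matrix $G\overline{G}^{T}$ is only $3\times 3$ over $\F_4$; I would compute each of its nine entries using $\overline{\alpha}=\alpha^2$ (so $\overline{\omega}=\omega^{2}$ and $\overline{\omega^{2}}=\omega$) and then check that the resulting Hermitian matrix has rank exactly $2$ by exhibiting a nonzero $2\times 2$ minor while verifying that the $3\times 3$ determinant vanishes. This is a finite, low-dimensional linear-algebra check that can be carried out by hand for each of the thirteen cases.

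Step (ii), the minimum-distance verification, is the most labor-intensive ingredient but also the most routine. Since $|C|=4^{3}=64$, I would enumerate the $63$ nonzero codewords $\sum_{i=1}^{3}\alpha_i\,g_i$ (with $(\alpha_1,\alpha_2,\alpha_3)\in \F_4^{3}\setminus\{\mathbf 0\}$ and $g_i$ the rows of $G$) and record their Hamming weights; the minimum of these must equal the claimed $d$. A small shortcut is that the matching upper bound $d\le D_4(n,3)$ from Table 2 often pins down $d$ from above, so one only needs to verify that the minimum weight is \emph{at least} $d$. I would also check, where helpful, that no codeword has weight strictly smaller than $d$ by exploiting the block structure of each matrix, but the direct enumeration suffices.

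The main obstacle is not conceptual but bookkeeping: the calculation is a thirteen-fold repetition of an $\F_4$-arithmetic routine, and the risk is an isolated sign/conjugation error producing the wrong $G\overline{G}^T$ or an overlooked low-weight codeword. To mitigate this I would (a) double-check each $G\overline{G}^T$ by confirming it is Hermitian, i.e.\ equal to its own conjugate transpose, which catches most transcription slips; (b) verify consistency with Lemma \ref{lem-1}, noting for example that the $[12,3,8]$ code is obtained from a smaller one-dimensional-Hermitian-hull code by prepending a copy of $S_3$, and similarly the $[13,3,9]$, $[14,3,10]$ cases are compatible with shortening/puncturing via Theorem \ref{thm-luo}; and (c) cross-check each minimum distance against the Griesmer/sphere-packing bounds to confirm it is feasible. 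Once these three checks are completed for each of the thirteen matrices, Proposition \ref{prop-1} follows.
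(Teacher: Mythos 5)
Your proposal is correct and matches the paper's approach: the paper offers no written argument for Proposition \ref{prop-1} beyond exhibiting the matrices, relying (per its Remark) on MAGMA to verify exactly the three facts you list --- rank, minimum weight, and $\mathrm{rank}(G\overline{G}^{T})=2$ via the hull-dimension formula of Proposition 2.1. Your finite verification plan, including the shortcut that the claimed $d$ equals $D_4(n,3)$ in every case so only the lower bound on the minimum weight needs checking, is precisely this computation made explicit.
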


\begin{prop}\label{prop-2}
There are quaternary linear $[5,3,2]$, $[6,3,3]$, $[10,3,6]$, $[11,3,7]$, $[15,3,10]$, $[16,3,11]$, $[20,3,14]$, $[21,3,15]$ codes with one-dimensional Hermitian hull, which have the following generator matrix, respectively.
$$\begin{array}{l}
    G_{[5,3,2]}=\left(
\begin{array}{ccccc}
1&0&0&\omega&0\\
0&1&0&\omega&\omega^2\\
0&0&1&0&\omega
\end{array}
\right),
G_{[6,3,3]}=\left(
\begin{array}{cccccc}
  1&0&0&1&1&1\\
0&1&0&\omega^2&\omega^2&1\\
0&0&1&1&\omega&1
\end{array}
\right),
G_{[10,3,6]}=\left(
\begin{array}{cccccccccc}
  1&0&0&0&\omega^2&0&\omega&\omega&\omega&1\\
0&1&0&0&1&1&1&\omega^2&1&1\\
0&0&1&0&1&\omega^2&\omega&\omega&0&\omega
\end{array}
\right),\\
 G_{[11,3,7]}=\left( \begin{array}{ccccccccccc}
                       1&0&0&1&1&1&\omega&1&\omega&\omega&0\\
0&1&0&0&1&1&1&1&1&1&1\\
0&0&1&1&0&1&1&\omega&\omega&\omega^2&\omega^2
                     \end{array}
 \right),
 G_{[15,3,10]}=\left(\begin{array}{ccccccccccccccc}
                      1&0&0&0&1&\omega&0&1&\omega&\omega^2&\omega^2&\omega&1&0&1\\
0&1&0&0&0&\omega^2&\omega&1&\omega^2&1&0&1&\omega^2&1&\omega^2\\
0&0&1&0&\omega&1&\omega^2&1&0&1&\omega^2&1&\omega&\omega^2&1
                     \end{array}
 \right),\\
 G_{[16,3,11]}=\left(
 \begin{array}{cccccccccccccccc}
  1&0&0&1&1&\omega&\omega&1&\omega&0&\omega^2&\omega&\omega&\omega&\omega&1\\
0&1&0&1&1&\omega^2&1&\omega&1&\omega^2&0&0&1&\omega^2&\omega^2&\omega^2\\
0&0&1&1&\omega&1&1&\omega&\omega^2&\omega&\omega^2&1&\omega&\omega&0&0
 \end{array}
 \right),\\
 G_{[20,3,14]}=\left(
 \begin{array}{cccccccccccccccccccc}
  1&0&0&\omega&\omega^2&\omega&0&1&0&\omega^2&\omega&1&0&\omega&\omega^2&0&1&1&1&1\\
0&1&0&0&1&1&1&0&1&\omega&\omega^2&1&0&\omega&\omega^2&1&\omega^2&\omega&\omega&\omega^2\\
0&0&1&\omega^2&\omega&\omega^2&1&0&1&\omega&\omega^2&0&1&\omega^2&\omega&0&1&1&1&1
 \end{array}
 \right),\\
 G_{[21,3,15]}=\left(
 \begin{array}{ccccccccccccccccccccc}
   1&0&0&1&1&\omega&\omega&1&\omega&0&\omega^2&\omega&\omega&\omega&0&\omega&1&1&0&0&1\\
0&1&0&1&1&\omega^2&1&\omega&1&\omega^2&0&0&1&\omega^2&1&\omega^2&1&\omega^2&\omega^2&0&0\\
0&0&1&1&\omega&1&1&\omega&\omega^2&\omega&\omega^2&1&\omega&\omega&\omega&0&\omega^2&0&\omega^2&1&\omega
 \end{array}
 \right).
  \end{array}
$$
\end{prop}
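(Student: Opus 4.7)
The statement to prove is Proposition \ref{prop-2}, which asserts that eight specific $3 \times n$ generator matrices over $\F_4$ each produce a quaternary linear code with the claimed length, dimension, minimum distance, and one-dimensional Hermitian hull. The plan is a direct verification code-by-code, relying only on the characterization in Proposition 3.2 (cited from \cite{DCC-hull}).

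For each matrix $G_{[n,3,d]}$ displayed in the statement, I would carry out three checks. First, verify $\mathrm{rank}(G) = 3$: this is immediate because each matrix contains the $3\times 3$ identity in its first three columns, so the generated code $C$ has dimension exactly $3$ and length $n$. Second, establish the minimum distance. Since every code here has only $4^3 - 1 = 63$ nonzero codewords, I would enumerate the weight distribution (e.g., by scanning all linear combinations $\alpha_1 r_1 + \alpha_2 r_2 + \alpha_3 r_3$ with $(\alpha_1,\alpha_2,\alpha_3) \in \F_4^3 \setminus \{0\}$ and recording Hamming weights) and confirm that the minimum nonzero weight equals the claimed $d$; for presentation in the paper one can simply report the full weight enumerator for each code. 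Third, verify the hull dimension: compute the $3 \times 3$ matrix $M := G\overline{G}^T$ using $\overline{\alpha} = \alpha^2$ on $\F_4$, check that $M \neq 0$ yet $\det(M) = 0$, so $\mathrm{rank}(M) = 2$, and conclude by Proposition 3.2 that
\[
\dim\bigl(\mathrm{Hull}_H(C)\bigr) = 3 - \mathrm{rank}(G\overline{G}^T) = 1.
\]

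I would organize the eight verifications in the same order as the statement. For the smaller codes $[5,3,2]$ and $[6,3,3]$, one can exhibit $M$ and its minor structure explicitly in the text; for the larger ones $[10,3,6]$ through $[21,3,15]$, I would tabulate either the full weight distribution or just the list $(d, A_d)$ of minimum weight and its multiplicity, together with $M$ and $\mathrm{rank}(M)$. A useful shortcut is that Lemma \ref{lem-1} lets one derive several entries of Proposition \ref{prop-2} from smaller ones by concatenating copies of the simplex generator $S_3$; for example, the $[21s+n,3,16s+d]$ construction propagates one-dimensional Hermitian hull from a seed $[n,3,d]$ code, which visibly connects $[5,3,2] \leadsto [26,3,18]$, $[6,3,3] \leadsto [27,3,19]$, etc., giving an internal consistency check on the matrices displayed.

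The routine part is the arithmetic in $\F_4$; the main obstacle is simply that the verification is tedious and error-prone by hand, especially computing $G\overline{G}^T$ for the $[20,3,14]$ and $[21,3,15]$ matrices and scanning $63$ codewords for minimum weight. In practice this is discharged by a short computer-algebra check (Magma, SageMath, or GAP/GUAVA), and the proof of Proposition \ref{prop-2} reduces to: "For each listed $G$, direct computation shows $G$ has rank $3$, the code $C$ generated by $G$ has minimum distance equal to the stated $d$, and $\mathrm{rank}(G\overline{G}^T) = 2$; hence by Proposition 3.2, $\dim(\mathrm{Hull}_H(C)) = 1$."
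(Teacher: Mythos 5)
Your proposal matches the paper's own treatment: the paper gives no written argument for this proposition, relying (as its remarks state) on Magma computations that verify exactly what you describe --- $\mathrm{rank}(G)=3$, the minimum distance by enumerating the $63$ nonzero codewords, and $\dim(\mathrm{Hull}_H(C)) = 3 - \mathrm{rank}(G\overline{G}^T) = 1$ via the cited hull-dimension formula. One small correction to your shortcut: for a $3\times 3$ matrix, $M \neq 0$ together with $\det(M)=0$ only gives $\mathrm{rank}(M)\in\{1,2\}$, so you must additionally exhibit a nonzero $2\times 2$ minor (or compute the rank directly) to exclude hull dimension $2$.
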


Based on the above discussion, we can obtain the following table.

\begin{center}
\begin{tabular}{ccc|ccc}
\multicolumn{6}{c}{{\rm Table 3: Optimal quaternary $[n,3]$ codes with one-dimensional Hermitian hull}}\\
\hline
   $n$ & $D_4^H(n,3,1)$& $D_4(n,3)$~(Reference) & $n$&$D_4^H(n,3,1)$ & $D_4(n,3)$~(Reference)\\
    \hline\hline
   $4$ &$2$& 2~\cite[Table 4]{DM} & $22$ &$16$& 16\cite[Table 4]{DM}\\

   $5$ &$2$& 3\cite[Table 4]{DM} & $23$ &$16$& 16\cite[Table 4]{DM}\\

   $6$ &$3$& 4\cite[Table 4]{DM} & $24$ &$17$& 17\cite[Table 4]{DM}\\

   $7$ &$4$& 4\cite[Table 4]{DM} & $25$ &$18$& 18\cite[Table 4]{DM}\\

   $8$ &$5$& 5\cite[Table 4]{DM}& $26$ &$18$& 19\cite[Table 4]{DM}\\

   $9$ &$6$& 6\cite[Table 4]{DM} & $27$ &$19$& 20\cite[Table 4]{DM}\\

   $10$ &$6$& 6\cite[Table 4]{DM} & $28$ &$20$& 20\cite[Table 4]{DM}\\

   $11$ &$7$& 7\cite[Table 4]{DM} & $29$ &$21$& 21\cite[Table 4]{DM}\\

   $12$ &$8$& 8\cite[Table 4]{DM} & $30$ &$22$& 22\cite[Table 4]{DM}\\

   $13$ &$9$& 9\cite[Table 4]{DM} & $31$ &$22$& 23\cite[Table 4]{DM}\\

   $14$ &$10$& 10\cite[Table 4]{DM} & $32$ &$23$& 24\cite[Table 4]{DM}\\

   $15$ &$10$& 11\cite[Table 4]{DM} & $33$ &$24$& 24\cite[Table 4]{DM}\\

   $16$ &$11$& 12\cite[Table 4]{DM} & $34$ &$25$& 25\cite[Table 4]{DM}\\

   $17$ &$12$& 12\cite[Table 4]{DM} & $35$ &$26$& 26\cite[Table 4]{DM}\\

   $18$ &$13$& 13\cite[Table 4]{DM} & $36$ &$26$& 27 \cite[Table 3]{IT-1}\\

   $19$ &$14$& 14\cite[Table 4]{DM} & $48$ &$35$& 36 \cite[Table 4]{IT-1}\\

   $20$ &$14$& 15\cite[Table 4]{DM} & $52$ &$38$& 39 \cite[Table 4]{IT-1}\\

   $21$ &$15$& 16\cite[Table 4]{DM}&\\
    \hline
\end{tabular}
\end{center}

\begin{remark}
To save the space, the codes in Table 3 can be obtained from one of the authors' website, namely, {\tt https://ahu-coding.github.io/code2/}.
\end{remark}

\setlength{\arraycolsep}{3pt}

\begin{prop}\label{prop-a}
Suppose that $n=21s+n_0\geq 4$. Then we have
$$
\begin{array}{ll}
  D^H_4(21s+1,3,1)=16s, & D_4^H(21s+2,3,1)=16s  \vspace{1ex},\\
  D_4^H(21s+3,3,1)=16s+1, & D_4^H(21s+4,3,1)=16s+2 \vspace{1ex},\\
  D_4^H(21s+7,3,1)=16s+4, & D_4^H(21s+8,3,1)=16s+5 \vspace{1ex},\\
  D_4^H(21s+9,3,1)=16s+6, & D_4^H(21s+12,3,1)=16s+8 \vspace{1ex},\\
  D_4^H(21s+13,3,1)=16s+9, & D_4^H(21s+14,3,1)=16s+10 \vspace{1ex},\\
  D_4^H(21s+17,3,1)=16s+12, & D_4^H(21s+18,3,1)=16s+13 \vspace{1ex},\\
  D_4^H(21s+19,3,1)=16s+14. &
\end{array}
$$
\end{prop}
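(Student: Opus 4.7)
The plan is a case-by-case verification across the thirteen residues $n_0$ listed, showing that in each case $D_4^H(21s+n_0,3,1)$ coincides with the Griesmer bound $g(21s+n_0) = 16s + g(n_0)$ recorded in Table 2. Since these are precisely the residues where the Griesmer bound turns out to be attainable under the one-dimensional Hermitian hull constraint, both bounds collapse to the same value.

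For the upper bound, I would simply observe that any quaternary linear $[n,3]$ code satisfies the Griesmer bound, hence $D_4^H(n,3,1) \leq D_4(n,3) \leq g(n)$. In each of the thirteen residues, the value $g(21s+n_0)$ read off from Table 2 matches the claim, giving the desired upper bound without invoking any property of the hull.

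For the lower bound, the engine is Lemma \ref{lem-1}: given a quaternary $[n_0',3,d_0]$ seed code with one-dimensional Hermitian hull, appending $s$ copies of the simplex generator $S_3$ (which is Hermitian self-orthogonal with parameters $[21,3,16]$) produces a $[21s+n_0',\,3,\,16s+d_0]$ code with one-dimensional Hermitian hull. For the ten residues $n_0 \in \{4,7,8,9,12,13,14,17,18,19\}$, Proposition \ref{prop-1} directly supplies a $[n_0,3,g(n_0)]$ seed with one-dimensional Hermitian hull, and the simplex extension delivers the target $[21s+n_0,3,16s+g(n_0)]$ code. For the three residues $n_0 \in \{1,2,3\}$, the hypothesis $n \geq 4$ forces $s \geq 1$, so I instead use the length-$(21+n_0)$ seeds $[22,3,16]$, $[23,3,16]$, $[24,3,17]$ from Proposition \ref{prop-1} and append $s-1$ further copies of $S_3$ to obtain $[21s+1,3,16s]$, $[21s+2,3,16s]$, $[21s+3,3,16s+1]$ codes with one-dimensional Hermitian hull.

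There is no substantive obstacle left in proving Proposition \ref{prop-a}: the entire depth of the argument has been absorbed into Proposition \ref{prop-1}, whose seed generator matrices must be verified (by direct computation of $\mathrm{rank}(G\overline{G}^T) = 2$ and of the minimum distance) to simultaneously meet the Griesmer bound and exhibit a one-dimensional Hermitian hull. That verification, while mechanical, is the only nontrivial ingredient; the proof of Proposition \ref{prop-a} itself reduces to matching each residue class with the appropriate seed and invoking the simplex extension of Lemma \ref{lem-1}.
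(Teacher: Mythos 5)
Your proposal is correct and follows exactly the paper's own route: the paper's proof of Proposition \ref{prop-a} is the one-line statement ``Combining Lemma \ref{lem-1}, Proposition \ref{prop-1} and Table 2, the result holds,'' which unpacks to precisely your argument (Griesmer/Table 2 for the upper bound, simplex extension of the Proposition \ref{prop-1} seeds via Lemma \ref{lem-1} for the lower bound, with the length-$22$, $23$, $24$ seeds handling the residues $n_0\in\{1,2,3\}$). Your write-up is in fact more explicit than the paper's about why those three residues need the longer seeds and about where the real computational burden lies, namely in verifying the generator matrices of Proposition \ref{prop-1}.
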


\begin{proof}
Combining Lemma \ref{lem-1}, Proposition \ref{prop-1} and Table 2, the result holds.
\end{proof}

\begin{prop}\label{prop-b}
Suppose that $n=21s+n_0\geq 6$. Then we have
$$
\begin{array}{ll}
  D^H_4(21s+6,3,1)=16s+3, & D_4^H(21s+10,3,1)=16s+6  \vspace{1ex},\\
  D^H_4(21s+11,3,1)=16s+7, & D_4^H(21s+15,3,1)=16s+10  \vspace{1ex},\\
  D^H_4(21s+16,3,1)=16s+11, & D_4^H(21s+20,3,1)=16s+14  \vspace{1ex},\\
  D^H_4(21s+21,3,1)=16s+15. &
\end{array}
$$
\end{prop}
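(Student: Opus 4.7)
The plan is to establish matching lower and upper bounds for each of the seven cases. For the lower bound, I would apply Lemma \ref{lem-1} to the base $[n_0,3,d_0]$ codes with one-dimensional Hermitian hull exhibited in Proposition \ref{prop-2}: appending $s$ copies of the simplex generator $S_3$ yields a $[21s+n_0,3,16s+d_0]$ code with one-dimensional Hermitian hull, where $(n_0,d_0)$ runs through $(6,3),(10,6),(11,7),(15,10),(16,11),(20,14),(21,15)$.

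For the upper bound, the Griesmer bound in Table 2 says the competing value is $16s+d_0+1$, and I need to rule out any $[21s+n_0,3,16s+d_0+1]$ code $C$ with one-dimensional Hermitian hull. First, such a $C$ must satisfy $d(C^{\perp_H})\geq 2$, since otherwise removing a zero column would give a $[21s+n_0-1,3,16s+d_0+1]$ code violating Griesmer. With $d(C^{\perp_H})\geq 2$ in hand, Lemma \ref{lem-t} forces every column multiplicity $m_i$ in the standard decomposition $G_3(\mathbf{m})$ to satisfy $m_i\geq s+4(d_0+1)-3n_0 = s-e$, where $e:=3n_0-4(d_0+1)\in\{0,1,2\}$ across $n_0\in\{6,10,11,15,16,20\}$. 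Whenever $s\geq e$, I would peel off $s-e$ copies of $S_3$ from the decomposition: Lemma \ref{lem-1} preserves the one-dimensional hull because $S_3$ is Hermitian self-orthogonal, and Lemma \ref{lem-rank} guarantees that the shortened generator still has rank three (since $16s+d_0+1>16(s-e)$). The output is a $[21e+n_0,3,16e+d_0+1]$ code $C'$ with one-dimensional Hermitian hull, whose length--distance pair lands in one of $(48,36),(52,39),(32,24),(36,27),(16,12),(20,15)$. Each such pair is recorded in Table 3 with $D_4^H<D_4$, yielding the desired contradiction.

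For the small residual ranges $s<e$, namely $s\in\{0,1\}$ for $n_0\in\{6,10\}$ and $s=0$ for $n_0\in\{11,15\}$, the claim at the lengths $n\in\{6,10,11,15,27,31\}$ is read off directly from Table 3. The case $n_0=21$ falls out of Theorem \ref{thm-2}: any $[21(s+1),3,16(s+1)]$ code is Hermitian self-orthogonal and hence has hull dimension $3\neq 1$, so $D_4^H(21s+21,3,1)\leq 16s+15$, matching the lower bound supplied by the $[21,3,15]$ base code. The main obstacle is the case bookkeeping: I must verify that every reduction target $(21e+n_0,16e+d_0+1)$ really appears in Table 3 with the strict inequality $D_4^H<D_4$, so that the Lemma \ref{lem-t} peeling argument genuinely forces a contradiction rather than a merely suboptimal code.
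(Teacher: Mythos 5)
Your proposal is correct and follows essentially the same route as the paper: lower bounds from Lemma \ref{lem-1} applied to the base codes of Proposition \ref{prop-2}, and upper bounds by reducing a hypothetical $[21s+n_0,3,16s+d_0+1]$ code (after first forcing $d(C^{\perp_H})\geq 2$ via Griesmer) down to one of the excluded lengths $16,20,32,36,48,52$ recorded in Table 3, with the length-$21(s+1)$ case handled by Theorem \ref{thm-2}. The only cosmetic difference is that you inline the peeling argument (Lemma \ref{lem-t} plus Lemmas \ref{lem-1} and \ref{lem-rank}) where the paper simply cites Theorem \ref{thm-no}, which packages exactly that reduction.
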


\begin{proof}
Suppose that $s$ is a nonnegative integer.
By Table 3, we know that
$$
\begin{array}{lll}
  D^H_4(6,3,1)=3, & D_4^H(10,3,1)=6,&D^H_4(11,3,1)=7 \vspace{1ex},\\
  D_4^H(15,3,1)=10,& D^H_4(16,3,1)=11,& D_4^H(20,3,1)=14  \vspace{1ex},\\
   D_4^H(21,3,1)=15,&  D^H_4(27,3,1)=19, &D^H_4(31,3,1)=22  \vspace{1ex}.\\
\end{array}
$$
Applying Lemma \ref{lem-1} to the above codes, we obtain
$$
\begin{array}{ll}
  D^H_4(21s+6,3,1)\geq16s+3, & D_4^H(21s+10,3,1)\geq16s+6  \vspace{1ex},\\
  D^H_4(21s+11,3,1)\geq16s+7, & D_4^H(21s+15,3,1)\geq16s+10  \vspace{1ex},\\
  D^H_4(21s+16,3,1)\geq16s+11, & D_4^H(21s+20,3,1)\geq16s+14  \vspace{1ex},\\
  D^H_4(21s+21,3,1)\geq16s+15. &
\end{array}
$$

It follows from Table 3 that there is no quaternary $[n_0,3,d_0]$ linear code with one-dimensional Hermitian hull for
 $$(n_0,d_0)\in \{(16,12),(20,15),(21,16),(32,24),(36,27),(48,36),(52,39)\}.$$
 By Theorems \ref{thm-no} and \ref{thm-2}, there is no quaternary $[21s+n_0,3,16s+d_0]$ linear code $C$ with one-dimensional Hermitian hull and $d(C^{\perp_H})\geq 2$. Now suppose that there is a quaternary $[21s+n_0,3,16s+d_0]$ linear code $C_0$ with one-dimensional Hermitian hull and $d(C_0^{\perp_H})=1$. By deleting the zero column, we can obtain a quaternary $[21s+n_0-1,3,16s+d_0]$ linear code $C'$ with one-dimensional Hermitian hull. This contradicts the Griesmer bound (see Table 2).
Hence
$$
\begin{array}{ll}
  D^H_4(21s+6,3,1)\leq16s+3, & D_4^H(21s+10,3,1)\leq16s+6  \vspace{1ex},\\
  D^H_4(21s+11,3,1)\leq16s+7, & D_4^H(21s+15,3,1)\leq16s+10  \vspace{1ex},\\
  D^H_4(21s+16,3,1)\leq16s+11, & D_4^H(21s+20,3,1)\leq16s+14  \vspace{1ex},\\
  D^H_4(21s+21,3,1)\leq16s+15. &
\end{array}
$$
Therefore, we obtain the desired result.
\end{proof}

\begin{remark}
Combining Propositions \ref{prop-a} and \ref{prop-b}, we can determine the exact value of $D_4^H(n,3,1)$ except that $n\equiv~5~({\rm mod}~21)$, as shown in Table 4. Therefore, we also complete the proof of Theorem \ref{thm-1}. Only the case $n\equiv~5~({\rm mod}~21)$ is not completed, since the number of quaternary linear $[68, 3, 51]$ codes is too large.
\end{remark}

\begin{center}
\begin{tabular}{cc|cc|cc}
\multicolumn{6}{c}{{\rm Table 4: The characterization for $d_{one}^H(n,3),n\geq 4$}}\\
\hline
   $n$ & $D_4^H(n,3,1)$& $n$ & $D_4^H(n,3,1)$&$n$ & $D_4^H(n,3,1)$\\
    \hline\hline
   $21s$ &$16s-1$& $21s+7$ &$16s+4$ &$21s+14$ &$16s+10$ \\
   $21s+1$ &$16s$& $21s+8$ &$16s+5$&$21s+15$ &$16s+10$\\
   $21s+2$ &$16s$& $21s+9$ &$16s+6$&$21s+16$ &$16s+11$ \\
 $21s+3$ &$16s+1$& $21s+10$ &$16s+6$&$21s+17$ &$16s+12$ \\
 $21s+4$ &$16s+2$& $21s+11$ &$16s+7$&$21s+18$ &$16s+13$ \\
 $21s+5$ &$\geq 16s+2$& $21s+12$ &$16s+8$&$21s+19$ &$16s+14$ \\
 $21s+6$ &$16s+3$& $21s+13$ &$16s+9$&$21s+20$ &$16s+14$ \\
    \hline
\end{tabular}
\end{center}

\subsection{Optimal quaternary linear $[n,n-2]$ codes with one-dimensional Hermitian hull}
In this section, we determine the exact value of $D_4^K(n,n-2,1)$ for $n\geq 3$.

\begin{theorem}
Suppose that $n>2$ is an integer. Then we have
$$D_4^H(n,n-2,1)= \left\{
\begin{array}{ll}
 3,~& {\rm if}~ n=4,\\
 2,~&{\rm if}~n=3~{\rm or}~n\geq 5.
\end{array}
\right.$$
\end{theorem}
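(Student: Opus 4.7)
The plan is to combine the Singleton bound $D_4^H(n,n-2,1)\le 3$ with a short case split on $n$. The cases $n\in\{3,4\}$ are essentially explicit: for $n=3$ this is $D_4^H(3,1,1)$, already settled by Theorem~\ref{th-k=1}; for $n=4$, the $[4,2,3]$ code with generator
$$G=\begin{pmatrix}1&0&1&1\\ 0&1&1&\omega\end{pmatrix}$$
satisfies $G\overline{G}^T=\bigl(\begin{smallmatrix}1&\omega\\ \omega^2&1\end{smallmatrix}\bigr)$ with determinant $1-\omega^{3}=0$ and nonzero rows, hence $\mathrm{rank}(G\overline{G}^T)=1$ and $\dim(\mathrm{Hull}_H(C))=1$, achieving $D_4^H(4,2,1)=3$.

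For $n=5$ I would prove $D_4^H(5,3,1)\le 2$ as follows. Any $[5,3,3]$ code meets the Singleton bound and is therefore MDS, so its Hermitian dual is a $[5,2,4]$ MDS code. Every such code corresponds to a $5$-arc in $PG(1,4)$; since $PG(1,4)$ itself has exactly $5$ points, this $5$-arc is unique and so all $[5,2,4]$ MDS codes over $\F_4$ are monomially equivalent. Moreover, a monomial transformation $M=PD$ with $d_i\in\F_4^*$ preserves the Hermitian inner product because $d_i\overline{d_i}=d_i^{3}=1$, so the Hermitian hull dimension is a monomial invariant. A direct check on the Reed--Solomon representative $G_{RS}=\bigl(\begin{smallmatrix}1&1&1&1&0\\ 0&1&\omega&\omega^2&1\end{smallmatrix}\bigr)$ yields $G_{RS}\overline{G_{RS}}^T=0$, so every $[5,2,4]$ MDS code is Hermitian self-orthogonal with hull dimension $2$, and no $[5,3,3]$ code can have one-dimensional Hermitian hull. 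The matching lower bound $D_4^H(5,3,1)\ge 2$ is provided by the $[5,3,2]$ code in Proposition~\ref{prop-2}.

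For $n\ge 6$ the upper bound follows from the MDS bound: a $[n,n-2,3]$ code would be MDS and its Hermitian dual would be an $[n,2,n-1]$ MDS code, but the maximum length of an MDS $[n,2]$ code over $\F_4$ is $q+1=5$. For the lower bound I would construct an $[n,n-2,2]$ code with one-dimensional Hermitian hull via its Hermitian dual. It suffices to give a $2\times n$ matrix $H$ of rank $2$ with no zero column (so that the primal has $d\ge 2$) and with $H\overline{H}^T$ of rank $1$. Two base examples suffice,
$$H_6=\begin{pmatrix}1&0&1&1&1&1\\ 0&1&1&\omega&1&1\end{pmatrix},\qquad H_7=\begin{pmatrix}1&1&1&1&1&0&0\\ 0&0&1&1&1&1&1\end{pmatrix},$$
whose Gram matrices are $\bigl(\begin{smallmatrix}1&\omega\\ \omega^2&1\end{smallmatrix}\bigr)$ and $\bigl(\begin{smallmatrix}1&1\\ 1&1\end{smallmatrix}\bigr)$, both of rank $1$. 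Appending a pair of identical nonzero columns $(x,y)^T$ to $H$ adds $2\bigl(\begin{smallmatrix}x\overline{x}&x\overline{y}\\ y\overline{x}&y\overline{y}\end{smallmatrix}\bigr)=0$ in characteristic two, hence $H\overline{H}^T$, the rank of $H$, and the no-zero-column property are all preserved. Starting from $H_6$ and appending $(n-6)/2$ such pairs handles even $n\ge 6$, and starting from $H_7$ and appending $(n-7)/2$ pairs handles odd $n\ge 7$.

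The main obstacle is the $n=5$ step, where one must fuse the classification of $[5,2,4]$ MDS codes into a single monomial class with the monomial invariance of the Hermitian hull (which hinges on the characteristic-$2$ identity $d\overline{d}=1$ on $\F_4^*$), in order to reduce all representatives to the single computation on $G_{RS}$. The remaining work is Singleton/MDS bookkeeping together with verification of the explicit base matrices.
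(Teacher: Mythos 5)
Your argument is correct and shares the paper's overall skeleton (explicit treatment of small $n$, a nonexistence bound for $n\geq 6$, and rank-one Gram parity-check matrices for the lower bound), but the key steps are executed by genuinely different means. For $n=4$ and $n=5$ the paper simply cites its earlier results (the $k=2$ theorem for $D_4^H(4,2,1)=3$ and the computer classification behind Table~3 for $D_4^H(5,3,1)=2$), whereas you give self-contained proofs: an explicit $[4,2,3]$ code with rank-one Gram matrix, and, for $n=5$, the observation that the columns of any $[5,2,4]$ code must exhaust the five points of $PG(1,4)$, so every such code is monomially equivalent to the simplex code $S_2$ and hence Hermitian self-orthogonal; the monomial invariance of the Hermitian hull via $M\overline{M}^{T}=I$ (since $d\overline{d}=d^{3}=1$ on $\F_4^{*}$) is exactly the right ingredient, and this replaces a Magma/classification dependency with a two-line geometric argument. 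For the upper bound at $n\geq 6$ you invoke the MDS length bound on the dual $[n,2,n-1]$ code ($n\leq q+1=5$) where the paper uses the sphere-packing bound; both yield $D_4(n,n-2)\leq 2$. For the lower bound the paper splits into odd $n$ (monotonicity from $D_4^H(n-1,n-2,1)=2$) and even $n$ (a single explicit parity-check matrix), while you instead append pairs of identical columns to the base matrices $H_6$ and $H_7$; this is valid because a duplicated column contributes the same rank-one summand twice to $H\overline{H}^{T}$, which vanishes in characteristic two, and it preserves the rank of $H$ and the absence of zero columns. I verified your Gram-matrix computations for the $[4,2,3]$ code, for $G_{RS}$, and for $H_6$ and $H_7$; all are correct, so the proposal is a complete and somewhat more self-contained proof than the one in the paper.
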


\begin{proof}
By Theorem \ref{th-k=1}, $D_4^H(3,1,1)=2$. By Theorem \ref{conjecture}, $D^H_4(4,2,1)=3$. By Table 3, $D^H_4(5,3,1)=2$.
If $D_4(n,n-2)\geq 3$, then it follows from the sphere-packing bound that
 $$4^{n-2}\leq \frac{4^n}{1+3n}.$$
If $n>5$, then $4^{n-2}> \frac{4^n}{1+3n},$ which contradicts the sphere-packing bound.
Hence $$D^H_4(n,n-2,1)\leq D_4(n,n-2)\leq 2.$$

\begin{itemize}
  \item Suppose that $n$ is odd. It follows from Theorem \ref{th-k=1} that
$$D^H_4(n,n-2,1)\geq D^H_4(n-1,n-2,1)=2.$$
Hence $D_4^H(n,n-2,1)=2$.
  \item
Suppose that $n$ is even. Consider the code $C$ with the following parity-check matrix
$$H=\left[
       \begin{array}{cccccccccc}
         1 & 0 & 1& 1&1&\ldots  & 1  \\
         0 & 1 & \omega&  \omega^2&0&\ldots & 0
       \end{array}
     \right]_{2\times n}.
$$

Then it can be checked that $C$ has parameters $[n,n-2,2]$ and $C$ has one-dimensional Hermitian hull. Hence $D_4^H(n,n-k,1)=2$.
\end{itemize}
This completes the proof.
\end{proof}

\subsection{Optimal quaternary linear $[n,n-3]$ codes with one-dimensional Hermitian hull}

In this section, we determine the exact value of $D_4^H(n,n-3,1)$ for $n\geq 4$.

\begin{lem}\label{lemma-d(n-k)}
Let $k\geq 3$ and $n\geq \frac{4^{k}-1}{3}-1$. Then $D^H_4(n,n-k,1)=2$.
\end{lem}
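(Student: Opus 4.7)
The plan is to establish matching upper and lower bounds $D_4^H(n,n-k,1)\le 2$ and $D_4^H(n,n-k,1)\ge 2$. For the upper bound, I would assume that $C$ is a quaternary $[n,n-k,d]$ code with $\dim\mathrm{Hull}_H(C)=1$ and $d\ge 3$, and derive a contradiction. A parity-check matrix $H$ of size $k\times n$ then has pairwise non-proportional nonzero columns, so its columns represent distinct projective points of $\mathrm{PG}(k-1,4)$, forcing $n\le (4^k-1)/3$. Combined with the hypothesis, only $n\in\{(4^k-1)/3-1,(4^k-1)/3\}$ need be treated (the range $n>(4^k-1)/3$ is vacuous). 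Rescaling the columns of $H$ by elements of $\F_4^*$ is a monomial transformation, and such transformations preserve the Hermitian hull dimension because $|\alpha|^2=\alpha^3=1$ for $\alpha\in\F_4^*$; so after rescaling, the columns of $H$ lie among the columns of $S_k$, making $C^{\perp_H}$ monomially equivalent either to the simplex code $S$ or to $S$ punctured at a single coordinate. The former has $\dim\mathrm{Hull}_H=k$ (since $S_k\overline{S_k}^T=0$), and for the latter, Theorem~\ref{thm-luo}(2) applies with $l=k$, $s=1$: any single coordinate lies in an information set of $\mathrm{Hull}_H(S)=S$ by a basis-exchange argument on the nonzero column $h_{k,i}$, yielding $\dim\mathrm{Hull}_H(S^{\{i\}})=k-1$. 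Both values are $\ge 2$ for $k\ge 3$, contradicting $\dim=1$.

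For the lower bound I would exhibit explicit parity-check matrices. Take
\[
H_1=[\,e_1\mid e_2\mid\cdots\mid e_{k-1}\mid e_k\mid e_k\,], \qquad
H_2=[\,e_1\mid\cdots\mid e_{k-1}\mid e_1+e_2\mid e_k\mid e_k\,],
\]
of sizes $k\times(k+1)$ and $k\times(k+2)$. Using $2xx^T=0$ in characteristic $2$, one checks $H_1\overline{H_1}^T=\mathrm{diag}(1,\ldots,1,0)$ and $H_2\overline{H_2}^T=e_1e_2^T+e_2e_1^T+\sum_{i=3}^{k-1}e_ie_i^T$, both of rank $k-1$. Moreover each $H_j$ has rank $k$ (its columns include $e_1,\ldots,e_k$), no zero column, and a duplicated column $(e_k,e_k)$, hence defines an $[n,n-k,2]$ code with one-dimensional Hermitian hull at $n=k+1$ and $n=k+2$. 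For arbitrary $n\ge k+1$, I would extend the base matrix whose length matches the parity of $n$ by appending pairs $[v\mid v]$ of identical nonzero columns; each such pair contributes $2v\overline{v}^T=0$ to the Gram matrix, so the hull dimension and the duplicated-column witness that $d(C)=2$ are both preserved while the length grows by $2$. Since $(4^k-1)/3-1\ge k+1$ for $k\ge 3$, the full hypothesized range is covered.

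I expect the main obstacle to be the upper-bound case $n=(4^k-1)/3-1$: here $[n,n-k,3]$ codes are not excluded by the projective-point count, and ruling out a one-dimensional hull requires the hull-dimension-drop statement Theorem~\ref{thm-luo}(2) applied to the self-orthogonal simplex code, together with the verification that every singleton lies in an information set of $\mathrm{Hull}_H(S)=S$. The remaining upper-bound cases reduce either to self-orthogonality of $S_k$ or to a pigeonhole count on projective points, and the lower bound is an explicit construction that exploits the characteristic-$2$ identity $2v\overline{v}^T=0$ to freely pad the parity-check matrix.
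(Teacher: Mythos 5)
Your proof is correct and follows essentially the same route as the paper: for the upper bound, both arguments reduce a hypothetical $[n,n-k,3]$ code to the case where $C^{\perp_H}$ is the simplex code (Hermitian hull dimension $k$) or its one-point puncturing (hull dimension $k-1$ via Theorem~\ref{thm-luo}(2)), contradicting hull dimension one when $k\geq 3$; your projective-point count replaces the paper's sphere-packing step for $n>\frac{4^k-1}{3}$, but these are the same estimate. The lower bound differs only in the choice of explicit witness (your padded parity-check matrices versus the paper's generator matrix), and both verifications are routine.
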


\begin{proof}
Suppose that $n= \frac{4^{k}-1}{3}$. Assume that $C$ is a quaternary linear $[n,n-3,3]$ code. Then $C^{\perp_H}$ is a quaternary linear $[n,3]$ code with $d((C^{\perp_H})^{\perp_H})=d(C)=3$. Then there exists a vector
${\bf m}=(m_1,\ldots,m_{\frac{4^k-1}{3}})$ such that $C^{\perp_H}$ is equivalent to the code $C_k({\bf m})$ with the generator matrix $G_k({\bf m})$. Since $C$ has the minimum distance $3$, $G_k({\bf m})$ does not have the same two columns, i.e., $m_i\leq 1$ for $1\leq i\leq \frac{4^k-1}{3}$.
By $\sum_{i}^{\frac{4^k-1}{3}}m_i=n=\frac{4^k-1}{3}$, we have
$m_i=1$. Hence ${\bf m}=(1,\ldots,1)$ and $C_k({\bf m})$ is a quaternary linear $[\frac{4^k-1}{3},k,4^{k-1}]$ code.
By Theorem \ref{thm-2}, $C_k({\bf m})$ is Hermitian SO. It turns out that
$$\dim({\rm Hull_H}(C))=\dim({\rm Hull_H}(C^{\perp_H}))=\dim({\rm Hull_H}(C_k({\bf m})))=k\geq 3.$$
This implies that $D_4^H(\frac{4^{k}-1}{3},n-k,1)\leq 2$.

Suppose that $n= \frac{4^{k}-1}{3}-1$. Assume that $C$ is a quaternary linear $[n,n-3,3]$ code. Then $C^{\perp_H}$ is a quaternary linear $[n,3]$ code with $d((C^{\perp_H})^{\perp_H})=d(C)=3$. Then there exists a vector
${\bf m'}=(m'_1,\ldots,m'_{\frac{4^k-1}{3}})$ such that $C^{\perp_H}$ is equivalent to the code $C_k({\bf m'})$ with the generator matrix $G_k({\bf m'})$. Since $C$ has the minimum distance $3$, $G_k({\bf m'})$ does not have the same two columns, i.e., $m'_i\leq 1$ for $1\leq i\leq \frac{4^k-1}{3}$.
By $\sum_{i}^{\frac{4^k-1}{3}}m'_i=n=\frac{4^k-1}{3}-1$, we have $c_{i_0}=0$ for some $i_0$ where $1\leq i_0\leq \frac{4^k-1}{3}$. Hence $C_k({\bf m'})$ is the punctured code of $C_k({\bf m})$ on the $i_0$-th coordinate.
By Theorem \ref{thm-luo},
$$\dim({\rm Hull_H}(C))=\dim({\rm Hull_H}(C^{\perp_H}))=\dim({\rm Hull_H}(C_k({\bf m})))-1=k-1\geq 2.$$
This implies that $D_4^H(\frac{4^{k}-1}{3}-1,n-k,1)\leq 2$.

Suppose that $n>\frac{4^{k}-1}{3}$. If $D_4(n,n-k)\geq 3$, then
$4^{n-k}> \frac{4^n}{1+n},$
which contradicts the sphere-packing bound.
Hence $D^H_4(n,n-k,1)\leq D_4(n,n-k)\leq 2$.
Consider the code $C$ with the following generator matrix
$$G=\left[
       \begin{array}{cccccccccc}
         1 & 0 & \ldots & 0 & 1 & 1 & 1 &0&\ldots&0 \\
         0 & 1 & \ldots & 0 & 1 & 1 & 0 &0&\ldots&0\\
         \vdots & \vdots & \ddots & \vdots & \vdots & \vdots & \vdots& \vdots & \ddots & \vdots  \\
         0 & 0 & \cdots & 1 & 1 & 1 & 0&0&\ldots&0
       \end{array}
     \right]_{(n-k)\times n}.
$$
Then it can be checked that $C$ has parameters $[n,n-k,2]$ and one-dimensional Hermitian hull. Hence $D_4^H(n,n-k,1)=2$. This completes the proof.
\end{proof}

\begin{theorem}\label{thm-n-3}
Suppose that $n>3$ is an integer. Then we have
$$D_4^H(n,n-3,1)= \left\{
\begin{array}{ll}
 4,~ &{\rm if}~ n=4,\\
 3,~& {\rm if}~ 5\leq n\leq 19,\\
 2,~&{\rm if}~n\geq 20.
\end{array}
\right.$$
\end{theorem}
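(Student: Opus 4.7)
The proof naturally splits into three regimes according to the size of $n$. The case $n=4$ is immediate: since $n-3=1$ and $n$ is even, Theorem \ref{th-k=1} gives $D_4^H(4,1,1)=4$. The case $n\geq 20$ is equally quick: one applies Lemma \ref{lemma-d(n-k)} with $k=3$, noting that $\tfrac{4^3-1}{3}-1=20$, which yields $D_4^H(n,n-3,1)=2$. So the genuine content of the theorem is the middle range $5\leq n\leq 19$, where one must establish $D_4^H(n,n-3,1)=3$.

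For the upper bound on this range I would treat the two small cases by direct appeal and then give a uniform geometric argument. Specifically, $D_4^H(5,2,1)=3$ is read off from Theorem \ref{conjecture} (with $5\equiv 0\pmod 5$), and $D_4^H(6,3,1)=3$ is entry $n=6$ of Table 3. For $7\leq n\leq 19$, I would argue there is simply no quaternary $[n,n-3,4]$ code at all: the $n$ columns of a $3\times n$ parity-check matrix of such a code must be pairwise non-proportional and no three collinear, i.e.\ they form an $n$-arc in $PG(2,4)$; but the maximum arc in $PG(2,q)$ with $q$ even is $q+2$, so here $n\leq 6$, a contradiction. Hence $D_4(n,n-3)\leq 3$, whence $D_4^H(n,n-3,1)\leq 3$.

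For the matching lower bound, the cases $n=5$ and $n=6$ are already witnessed by the codes producing the values above. For $7\leq n\leq 19$, the plan is to dualize the explicit codes of Propositions \ref{prop-1} and \ref{prop-2}: between them they exhibit, for every such $n$, a generator matrix $G^*$ of a quaternary $[n,3,D_4^H(n,3,1)]$ code $C^*$ with one-dimensional Hermitian hull. Since the Hermitian hull dimension is preserved under duality (Proposition~2.1), the dual $(C^*)^{\perp_H}$ is a quaternary $[n,n-3,d^{\perp}]$ code with one-dimensional Hermitian hull, and $d^{\perp}$ equals the minimum number of linearly dependent columns of $G^*$. Thus it suffices to check that no two columns of $G^*$ are $\F_4$-scalar multiples of each other, which forces $d^{\perp}\geq 3$; combined with the upper bound this gives $d^{\perp}=3$, and hence $D_4^H(n,n-3,1)\geq 3$.

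The only genuine work, and therefore the main obstacle, is this final verification: for each of the thirteen explicit matrices falling in the range $7\leq n\leq 19$, confirm that the $n$ columns represent $n$ distinct points of $PG(2,4)$. This is a finite but matrix-by-matrix check. Since every one of these matrices is presented in the form $[\,I_3\mid A\,]$, the check reduces to (i) verifying that no column of $A$ is a scalar multiple of a standard basis vector $e_i$ (equivalently, every column of $A$ has at least two nonzero entries, except that a column with exactly one nonzero entry is compared against the corresponding $e_i$), and (ii) verifying that no two columns of $A$ are scalar multiples by inspecting entrywise ratios. This can be discharged by a short inspection, at which point the theorem is complete.
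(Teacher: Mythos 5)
Your overall architecture is sound and genuinely different from the paper's: you get the upper bound for $7\le n\le 19$ from the arc bound in $PG(2,4)$ (the paper leaves this implicit), and you propose to get the lower bound by Hermitian-dualizing the explicit $[n,3]$ codes of Propositions \ref{prop-1} and \ref{prop-2}, whereas the paper instead shortens two Magma-found codes (a $[21,18,3]$ code with $3$-dimensional Hermitian hull and an $[11,8,3]$ code with one-dimensional Hermitian hull) using Theorem \ref{thm-luo}. The cases $n=4$, $n=5$, $n=6$ and $n\ge 20$ are handled exactly as in the paper.

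However, the step you yourself identify as ``the only genuine work'' actually fails for two of the thirteen lengths. The matrices $G_{[10,3,6]}$ and $G_{[15,3,10]}$ in Proposition \ref{prop-2} each contain a zero column (the fourth column of each is $(0,0,0)^T$ --- this is no accident, since $D_4^H(10,3,1)=D_4^H(9,3,1)$ and $D_4^H(15,3,1)=D_4^H(14,3,1)$, so the optimal length-$10$ and length-$15$ codes are exhibited as the length-$9$ and length-$14$ codes padded with a zero coordinate). Consequently $\bigl(C^*\bigr)^{\perp_H}$ has minimum distance $1$ for $n=10$ and $n=15$, and your verification cannot be ``discharged by a short inspection'': it discharges negatively, leaving $D_4^H(10,7,1)\ge 3$ and $D_4^H(15,12,1)\ge 3$ unproven. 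The gap is repairable --- for the lower bound you do not need the $[n,3]$ code to be optimal, only to have one-dimensional Hermitian hull and a generator matrix with pairwise non-proportional nonzero columns, and such codes exist for $n=10,15$ (e.g.\ the duals of the $[10,7,3]$ and $[15,12,3]$ codes the paper produces by shortening) --- but as written your proposal supplies no witness for these two lengths, so the proof is incomplete there.
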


\begin{proof}
By Theorem \ref{th-k=1}, we know that $D_4^H(4,1,1)=4$. By Lemma \ref{lemma-d(n-k)}, we know that $D_4^H(n,n-3,1)=2$ for $n\geq 20$. By Sections 4.2 and 4.3, we know that $D_4^H(5,2,1)=3$ and $D^H_4(6,3,1)=3$.

By BKLC database of Magma \cite{magma}, there exists a quaternary linear $[21,18,3]$ code $C^3_{[21,18,3]}$ with 3-dimensional Hermitian hull. Shortening the code $C^3_{[21,18,3]}$ on the sets $\{11,17\}$, $\{2,6,12\}$, $\{4,7,10,12\}$, $\{8,10,11,16,17\}$, $\{6,11,13,15,16,18\}$,
$\{1,10,11,14,16,\\20,21\}$, $\{1,2,5,6,9,10,13,14\}$ and $\{1,2,3,4,5,6,12,15,18\}$, one can construct quaternary linear $[19,16,3]$, $[18,15,3]$, $[17,14,3]$, $[16,13,3]$, $[15,12,3]$, $[14,11,3]$, $[13,10,3]$ and $[12,9,3]$ codes with one-dimensional Hermitian hull.

By BKLC database of Magma \cite{magma}, there exists a quaternary linear $[11,8,3]$ code $C^1_{[11,8,3]}$ with one-dimensional Hermitian hull. Shortening the code $C^1_{[11,8,3]}$ on the sets $\{3\}$, $\{2,11\}$, $\{1,2,4\}$, $\{2,7,9,11\}$ and $\{1,5,6,9,10\}$. one can construct quaternary linear $[10,7,3]$, $[9,6,3]$, $[8,5,3]$, $[7,4,3]$ and $[6,3,3]$ codes with one-dimensional Hermitian hull.
\end{proof}

\begin{remark}
To save the space, the codes in Theorem \ref{thm-n-3} can be obtained from one of the authors' website, namely, {\tt https://ahu-coding.github.io/code2/}.
\end{remark}

\section{Quaternary linear codes with one-dimensional Hermitian hull of lengths up to 12}

In this section, we determine the largest minimum weight $D_4^H(n,k,1)$ among all quaternary linear $[n, k]$ codes with one-dimensional Hermitian hull for $1\leq k<n\leq 12$. In Section 4, we determine the exact values of $D_4^H(n,k,1)$ and $D_4^H(n,n-k,1)$ for $k\in \{1,2,3\}$. Hence we only consider $4\leq k\leq n-4\leq 8$.
Based on the classification in \cite{D-1}, we find $D_4^H(n,k,1)<D_4(n,k)$ for $(n,k)\in \{(10,4),(11,5),(11,6)\}.$

\begin{itemize}
  \item [(1)] By BKLC database of Magma \cite{magma}, there exists a quaternary LCD $[13,9,4]$ code $C^0_{[13,9,4]}$. Shortening the code $C^0_{[13,9,4]}$ on coordinates sets $\{4\}$, $\{1,8\}$, $\{1,4,11\}$, $\{4,7,10,12\}$ and $\{5,7,10,11,13\}$, one can obtain quaternary linear $[12,8,4]$, $[11,7,4]$, $[10,6,4]$, $[9,5,4]$ and $[8,4,4]$ codes with one-dimensional Hermitian hull.
  \item [(2)] By BKLC database of Magma \cite{magma}, there exist quaternary linear $[10,5,5]$ and $[12,6,6]$ codes with one-dimensional Hermitian hull. We can construct a quaternary linear $[9,4,5]$ code with one-dimensional Hermitian hull and its generator matrix is
   $$G_{[9,4,5]}=\left(\begin{array}{ccccccccc}
             1&0&0&0&\omega^2&1&0&1&\omega^2\\
0&1&0&0&\omega^2&1&1&\omega^2&0\\
0&0&1&0&0&1&1&1&1\\
0&0&0&1&1&0&1&\omega^2&\omega^2
           \end{array}
   \right).$$
  \item [(3)] By BKLC database of Magma \cite{magma}, there exists a quaternary LCD $[13,6,6]$ code $C^0_{[13,6,6]}$. Shortening the code $C^0_{[13,6,6]}$ on the set $\{4\}$, one can construct a quaternary linear $[12,5,6]$ code with one-dimensional Hermitian hull.
  \item [(4)] By BKLC database of Magma \cite{magma}, there exists a quaternary linear $[14,4,9]$ code $C^1_{[14,4,9]}$ with one-dimensional Hermitian hull. Puncturing  the code $C^1_{[14,4,9]}$ on the sets $\{9,13\}$ and $\{1,10,12\}$, one can construct quaternary linear $[12,4,7]$ and $[11,4,6]$ codes with one-dimensional Hermitian hull.
  \item [(5)] By adding the zero column, we have $D^H_4(10,4,1)\geq D^H_4(9,4,1)=5$, $D^H_4(11,5,1)\geq D^H_4(10,5,1)=5$, $D^H_4(11,6,1)\geq D^H_4(10,6,1)=4$, $D^H_4(12,7,1)\geq D^H_4(11,7,1)=4$.
\end{itemize}
Combined with the above discussion, we give Table 5.

\begin{table}[h]\setlength\tabcolsep{12pt}
\centering
\begin{threeparttable}
\begin{tabular}{cccccccccccc}
\multicolumn{12}{c}{{\rm Table 5: The exact value of $D^H_4(n,k,1)$ for $n\leq 12$}}\\
\hline
   $n\backslash k$ &1&2 & $3$ & $4$ &5&6&7&8&9&10&11\\
    \hline\hline
    2&2&&&&&&&&&&\\
    3&2&1&&&&&&&&&\\
   $4$ &$4$ &3 &$2$ & &&&&&&&\\
   $5$ & $4$& 3 &$2$ &1&& &&&&&\\
   $6$ & $6$&4&$3$ & 2 &2&&&&&&\\
   $7$ & $6$&5  &$4$ &3 &2&1&&&&&\\
   $8$ &$8$ &5 &$5$ &4  &3&2&2&&&&\\
   $9$ &$8$ &7 &$6$ &5 &4&3&2&1&&&\\
   $10$ &$10$&7 &$6$ &5  &5 &4&3&2&2&&\\
   $11$ &$10$&8 &$7$ & 6 &5 &4&4&3&2&1&\\
   $12$ &$12$&9 &$8$ &7  &6 &6&4 &4&3&2&2\\
    \hline
\end{tabular}
\begin{tablenotes}
\footnotesize
\item The upper bound can refer to \cite{codetables}.
\end{tablenotes}
\end{threeparttable}
\end{table}

\begin{remark}
The value in Table 5 denotes the minimum distance of an optimal
quaternary linear $[n, k]$ code with one-dimensional Hermitian hull. All computations in this paper have been done by MAGMA \cite{magma}. To save the space, the codes in Table 1 can be obtained from one of the authors' website, namely, {\tt https://ahu-coding.github.io/code2/}.
\end{remark}

\section{Applications to EAQECCs}
In this section, we introduce some definitions about EAQECCs and construct some EAQECCs from quaternary linear codes with one-dimensional Hermitian hull.
We use $[[n,k,d;c]]_q$ to denote a $q$-ary EAQECC that encodes $k$ information qubits into $n$ channel qubits with the help of $c$ pre-shared entanglement
pairs. EAQECCs were introduced by Brun et al. in \cite{Eaqecc}, which include the standard quantum stabilizer codes as a special case.

\begin{prop}{\rm\cite{DCC-hull}}\label{prop-EAqecc}
Let $C$ be a quaternary $[n,k,d]$ linear code and $C^{\perp_H}$ be its Hermitian dual code with parameters $[n,n-k,d^{\perp_H}]_4$, where $d^{\perp_H}$ is the minimum distance of $C^{\perp_H}$. Assume that $\dim({\rm Hull}_{\rm H}(C))=1$. Then, there exist an $[[n,k-1,d;n-k-1]]_2$ EAQECC and an $[[n,n-k-1,d^{\perp_H};k-1]]_2$ EAQECC.
\end{prop}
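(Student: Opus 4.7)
The plan is to invoke the Brun--Devetak--Hsieh construction of EAQECCs from classical linear codes over $\mathbb{F}_{q^2}$, specialized to $q=2$. The form of that theorem that I would use states: any quaternary linear $[n,k,d]$ code $C$ with $\dim(\mathrm{Hull}_H(C)) = l$ yields an $[[n,\, k-l,\, d;\, n-k-l]]_2$ EAQECC, where the integer $n-k-l$ of pre-shared ebits matches the codimension of the hull inside $C^{\perp_H}$. Everything in the proposition is then obtained by specializing this to $l=1$ and to the pair $(C, C^{\perp_H})$.

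To establish the general construction, I would start from a generator matrix $G$ of $C$ whose first $l$ rows span $\mathrm{Hull}_H(C) = C \cap C^{\perp_H}$. A symplectic Gram--Schmidt procedure with respect to the Hermitian form $\langle \cdot, \cdot\rangle_H$ then extends these $l$ isotropic rows to a basis of $C$ in which the remaining $k-l$ rows can be paired with $k-l$ vectors chosen from $C^{\perp_H} \setminus \mathrm{Hull}_H(C)$ so as to form hyperbolic pairs. The $l$ isotropic rows generate the stabilizer portion of the code, while each hyperbolic pair is absorbed using one pre-shared ebit, producing $k-l$ logical qubits together with $n-k-l$ ebits. The preservation of the Hamming distance $d$ is immediate from the fact that the stabilizer inherits the weight structure of $C$. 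This symplectic-basis step is the only genuinely technical ingredient; the rest is a bookkeeping count of dimensions.

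The proposition then follows in two applications of this general result. First, applying it to $C$ with $l=1$ directly delivers the claimed $[[n,\, k-1,\, d;\, n-k-1]]_2$ EAQECC. Second, by the rank identity $\dim(\mathrm{Hull}_H(C)) = k - \mathrm{rank}(G\overline{G}^T) = \dim(\mathrm{Hull}_H(C^{\perp_H}))$ recalled in Section~2, the Hermitian dual $C^{\perp_H}$ is itself a quaternary $[n,\, n-k,\, d^{\perp_H}]$ code with one-dimensional Hermitian hull. A second application of the construction to $C^{\perp_H}$ therefore yields an $[[n,\, (n-k)-1,\, d^{\perp_H};\, n-(n-k)-1]]_2 = [[n,\, n-k-1,\, d^{\perp_H};\, k-1]]_2$ EAQECC. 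The main obstacle in this plan is nothing beyond the symplectic Gram--Schmidt step inside the Brun--Devetak--Hsieh construction; once that is granted, the stated parameters drop out immediately on substituting $l=1$.
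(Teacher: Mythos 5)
This proposition is not proved in the paper at all: it is quoted verbatim from Guenda--Jitman--Gulliver \cite{DCC-hull}, so there is no in-paper argument to compare against. Your overall route is the intended one --- invoke the general statement that an $[n,k,d]_4$ code with $l$-dimensional Hermitian hull yields an $[[n,k-l,d;n-k-l]]_2$ EAQECC, then apply it to $C$ and, via the rank identity $\dim({\rm Hull}_{\rm H}(C))=k-{\rm rank}(G\overline{G}^T)=\dim({\rm Hull}_{\rm H}(C^{\perp_H}))$, to $C^{\perp_H}$ --- and the final parameter substitutions are correct.

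However, the one step you identify as ``the only genuinely technical ingredient'' is described inconsistently, and as written it would not produce the stated parameters. In the Wilde--Brun/Guenda--Jitman--Gulliver construction the stabilizer for the $[[n,k-l,d;n-k-l]]_2$ code is built from a \emph{parity-check} matrix $H$ of $C$, i.e.\ from the $n-k$ generators of $C^{\perp_H}$; the ebit count is $c={\rm rank}(H\overline{H}^T)=(n-k)-l$, the net number of logical qubits is $2k-n+c=k-l$, and the distance is governed by $C$. You instead take a generator matrix $G$ of $C$, declare its $l$ hull rows to be the isotropic stabilizer generators, and pair the remaining $k-l$ rows into hyperbolic pairs each consuming one ebit. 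That bookkeeping gives $k-l$ ebits, not the $n-k-l$ you then claim; and feeding $G$ into the construction (so that $G$ is the parity-check matrix of $C^{\perp_H}$) actually yields the \emph{second} code, $[[n,n-k-1,d^{\perp_H};k-1]]_2$, not the first. Your two applications therefore have their roles swapped, and the internal count in the symplectic Gram--Schmidt sketch does not close. The fix is only to route the argument through $H$ rather than $G$ (equivalently, through $C^{\perp_H}$ as the stabilizer source) and to note that the hyperbolic pairs arise among the $n-k$ rows of $H$, with ${\rm rank}(H\overline{H}^T)=n-k-l$ of them requiring entanglement assistance; the dual set of parameters then follows by the symmetric application you already describe.
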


Combining Table 4 and Proposition \ref{prop-EAqecc}, we have the following corollary.

\begin{cor}
Suppose that $n=21s+t\geq 4$. There is an $[[n,k,d;c]]_2$ EAQECC for the following parameters.
$$\begin{array}{ll}
    [[21s,2,16s-1;21s-4]]_2, & [[21s+11,2,16s+7;21s+7]]_2, \\
    {[[21s+1,2,16s;21s-3]]}_2, & {[[21s+12,2,16s+8;21s+8]]}_2,\\
    {[[21s+2,2,16s;21s-2]]}_2, & {[[21s+13,2,16s+9;21s+9]]}_2,\\
    {[[21s+3,2,16s+1;21s-1]]}_2, & {[[21s+14,2,16s+10;21s+10]]}_2,\\
    {[[21s+4,2,16s+2;21s]]}_2, & {[[21s+15,2,16s+10;21s+11]]}_2,\\
    {[[21s+5,2,16s+2;21s+1]]}_2, & {[[21s+16,2,16s+11;21s+12]]}_2,\\
    {[[21s+6,2,16s+3;21s+2]]}_2, & {[[21s+17,2,16s+12;21s+13]]}_2,\\
    {[[21s+7,2,16s+4;21s+3]]}_2, & {[[21s+18,2,16s+13;21s+14]]}_2,\\
    {[[21s+8,2,16s+5;21s+4]]}_2, & {[[21s+19,2,16s+14;21s+15]]}_2,\\
    {[[21s+9,2,16s+6;21s+5]]}_2, & {[[21s+20,2,16s+14;21s+16]]}_2,\\
    {[[21s+10,2,16s+6;21s+6]]}_2.&
  \end{array}
$$
\end{cor}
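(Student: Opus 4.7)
The plan is to invoke Proposition \ref{prop-EAqecc} directly with $k=3$ applied to the quaternary linear $[n,3,d]$ codes with one-dimensional Hermitian hull whose existence is certified (or lower bounded) by Table 4. For each residue class $t\in\{0,1,\ldots,20\}$, setting $n=21s+t$ and letting $d=D_4^H(n,3,1)$ be the value from Table 4, Proposition \ref{prop-EAqecc} produces an $[[n,k-1,d;n-k-1]]_2=[[n,2,d;n-4]]_2$ EAQECC. Thus the entanglement parameter is uniformly $c=n-4$, and the dimension drops by one from $k=3$ to $k-1=2$.

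The verification is then a bookkeeping task: for each of the $21$ entries in Table 4, one checks that the triple $(n,d,n-4)$ agrees with the triple $(n,d,c)$ listed in the corollary. For instance, $n=21s$ gives $n-4=21s-4$, producing $[[21s,2,16s-1;21s-4]]_2$; $n=21s+11$ gives $n-4=21s+7$, producing $[[21s+11,2,16s+7;21s+7]]_2$; and $n=21s+20$ gives $n-4=21s+16$, producing $[[21s+20,2,16s+14;21s+16]]_2$. All remaining entries follow from the same routine substitution.

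The only minor subtlety concerns $n=21s+5$, where Table 4 supplies only the lower bound $D_4^H(21s+5,3,1)\geq 16s+2$. Since the proof of Proposition \ref{prop-b} constructs an explicit quaternary linear $[21s+5,3,16s+2]$ code with one-dimensional Hermitian hull (via Lemma \ref{lem-1} applied to a shorter code), Proposition \ref{prop-EAqecc} still applies to this code and yields the claimed $[[21s+5,2,16s+2;21s+1]]_2$ EAQECC. No step is expected to pose any genuine difficulty; the corollary is essentially a catalogue of the EAQECCs obtained by funneling the codes of Theorem \ref{thm-1} through Proposition \ref{prop-EAqecc}.
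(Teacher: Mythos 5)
Your proposal is correct and matches the paper's proof exactly: the corollary is obtained by feeding the $[n,3,D_4^H(n,3,1)]$ codes of Table~4 into Proposition~\ref{prop-EAqecc} with $k=3$, yielding $[[n,2,d;n-4]]_2$ in every residue class, and your handling of the $n\equiv 5 \pmod{21}$ case (where only the achieved lower bound $16s+2$ is needed) is the right observation. The only quibble is attribution: the $[21s+5,3,16s+2]$ code comes from Lemma~\ref{lem-1} applied to the $[5,3,2]$ code of Proposition~\ref{prop-2}, not from the proof of Proposition~\ref{prop-b}, which does not treat that residue.
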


\begin{example}
By Table 4, there is a quaternary $[12,3,8]$ linear code with one-dimensional Hermitian hull. By Proposition \ref{prop-EAqecc}, we obtain a binary EAQECC with parameters $[[12,2,8;8]]_2$, which has a better minimum distance and smaller amount of entanglement than the best known EAQECC $[[12,2,7;9]]_2$ (see \cite{codetables}).

By Table 5, there is a quaternary $[12,4,7]$ linear code with one-dimensional Hermitian hull. By Proposition \ref{prop-EAqecc}, we obtain a binary EAQECC with parameters $[[12,3,7;7]]_2$, which has smaller amount of entanglement than the best known EAQECC $[[12,3,7;9]]_2$ (see \cite{codetables}).

Combining Table 4, Table 5 and Proposition \ref{prop-EAqecc}, we give Tables 6 and 7. The parameters in bold denote that the corresponding code has different parameters according to \cite{codetables}.
\end{example}

\begin{center}
\begin{tabular}{cccccccccccc}
\multicolumn{12}{c}{{\rm Table 6: $[[n,k,d;c]]_2$ EAQECCs with $[d;c]$ for $n\leq 12$}}\\
\hline
   $n\backslash k$ &0&1&2 & $3$ & $4$ &5&6&7&8&9&10\\
    \hline\hline
    2&[2;0]&&&&&&&&&&\\
    3&[2;1]&[1;0]&&&&&&&&&\\
   $4$ &[4;2] &[3;1] &[2;0] & &&&&&&&\\
   $5$ & [4;3]& [3;2] &[2;1] &[1;0]&& &&&&&\\
   $6$ & [6;4]&[4;3]&[3;2] & [2;1] &[2;0]&&&&&&\\
   $7$ & [6;5]&[5;4]  &[4;3] &[3;2] &[2;1]&[1;0]&&&&&\\
   $8$ &[8;6] &[5;5] &[5;4] &{\bf[4;3]}  &[3;2]&[2;1]&[2;0]&&&&\\
   $9$ &{\bf[8;7]} &[7;6] &[6;5] &{\bf[5;4]} &[4;3]&[3;2]&[2;1]&[1;0]&&&\\
   $10$ &[10;8]&[7;7] &[6;6] &[5;5]  &[5;4] &{\bf[4;3]}&[3;3]&[2;1]&[2;0]&&\\
   $11$ &{\bf[10;9]}&[8;8] &{\bf[7;7]} & [6;6] &[5;5] &[4;4]&[4;3]&[3;2]&[2;1]&[1;0]&\\
   $12$ &[12;10]&[9;9] &{\bf[8;8]} &{\bf[7;7]}  &[6;6] &[6;5]&[4;4] &{\bf[4;3]}&[3;2]&[2;1]&[2;0]\\
    \hline
\end{tabular}
\end{center}

\begin{table}[h]\renewcommand{\arraystretch}{1.2}
\centering
\begin{tabular}{l|c|c}
\multicolumn{3}{c}{{\rm Table 7: $[[n,k,d;c]]_2$ EAQECCs for $k=2$}}\\
\hline
  The known EAQECCs  & Our parameters (Table 4) & The related EAQECCs \\
    \hline\hline
\makecell[l]{$[[13,2,4;0]]_2$ \cite{codetables}\\
    $[[13,2,5;1]]_2$ \cite{codetables}\\
$[[13,2,6;2]]_2$\cite{codetables}\\
$[[13,2,8;9]]_2$\cite{codetables}}&$[13,3,9]_4$ & ${\bf [[13,2,9;9]]_2}$ \\
\hline

  \makecell[l]{$[[14,2,5;0]]_2$ \cite{codetables}\\
    $[[14,2,7;2]]_2$ \cite{codetables}\\
$[[14,2,9;9]]_2$\cite{codetables}} &$[14,3,10]_4$ & ${\bf[[14,2,10;10]]_2}$ \\
    \hline

 \makecell[l]{$[[16,2,6;0]]_2$ \cite{codetables}\\
    $[[16,2,8;2]]_2$ \cite{codetables}\\
$[[16,2,10;9]]_2$\cite{codetables}} &$[16,3,11]_4$ & ${\bf[[16,2,11;12]]_2}$ \\
    \hline

    \makecell[l]{$[[17,2,6;0]]_2$ \cite{codetables}\\
    $[[17,2,8;2]]_2$ \cite{codetables}\\
$[[17,2,10;9]]_2$\cite{codetables}} &$[17,3,12]_4$ & ${\bf[[17,2,12;13]]_2}$ \\
    \hline

     \makecell[l]{$[[18,2,6;0]]_2$ \cite{codetables}\\
    $[[18,2,8;2]]_2$ \cite{codetables}\\
$[[18,2,10;9]]_2$\cite{codetables}} &$[18,3,13]_4$ & ${\bf[[18,2,13;14]]_2}$ \\
    \hline

      \makecell[l]{$[[19,2,6;0]]_2$ \cite{codetables}\\
    $[[19,2,9;2]]_2$ \cite{codetables}\\
$[[19,2,10;9]]_2$\cite{codetables}} &$[19,3,14]_4$ & ${\bf[[19,2,14;15]]_2}$ \\
    \hline

     \makecell[l]{$[[20,2,6;0]]_2$ \cite{codetables}\\
$[[20,2,10;2]]_2$\cite{codetables}} &$[20,3,14]_4$ & ${\bf[[20,2,14;16]]_2}$ \\
    \hline

    \makecell[l]{$[[22,2,6;0]]_2$ \cite{codetables}\\
    $[[22,2,7;1]]_2$ \cite{codetables}\\
$[[22,2,11;2]]_2$\cite{codetables}} &$[22,3,16]_4$ & ${\bf[[22,2,16;18]]_2}$ \\
    \hline
\end{tabular}
\end{table}

\section{Conclusion}
We have studied some properties of quaternary linear codes with one-dimensional Hermitian hull. We have determined the exact values of $D_4^H(n,k,1)$ and $D_4^H(n,n-k,1)$ for $n\leq 12$ or
$k\in\{1,2,3\}$, which solved a conjecture proposed by Mankean and Jitman \cite{AJM-conjecture}. Finally, as
an application, we constructed some EAQECCs with new parameters from quaternary linear codes with one-dimensional Hermitian hull.

As future work, it would be an interesting problem to consider the exact value of $D_4^H(n,k,l)$ for $l\geq 2$ or to extend the codetable to longer lengths.
\section*{Conflict of Interest}
The authors have no conflicts of interest to declare that are relevant to the content of this article.

\section*{Data Deposition Information}
Our data can be obtained from the authors upon reasonable request.

\section*{Acknowledgement}
This research is supported by Natural Science Foundation of China (12071001).

\end{document}